\def\beq{\begin{equation}}
\def\eeq{\end{equation}}
\def\beqa{\begin{eqnarray}}
\def\eeqa{\end{eqnarray}}
\def\beqan{\begin{eqnarray*}}
\def\eeqan{\end{eqnarray*}}
\def\EE{{\mathbb{E}}}
\def\PP{{\mathbb{P}}}
\def\argmax{\mathop{\mathrm{arg\,max}}}
\newcommand{\Ic}{{\cal I}}
\newcommand{\Jc}{{\cal J}}
\newtheorem{remark}{Remark}
\newtheorem{definition}{Definition}
\newtheorem{theorem}{Theorem}
\newtheorem{lemma}{Lemma}
\newtheorem{example}{Example}
\def\tm1{t\! - \! 1}
\def\tp1{t\! + \! 1}
\def\dbf{\mathbf{d}}
\def\Cbf{\mathbf{C}}
\newcommand{\Ec}{{\cal E}}
\def\Qbf{\mathbf{Q}}
\newcommand{\Vc}{{\cal V}}
 \def\Gc{\mathcal{G}}
\def\Wc{\mathcal{W}}
\def\Yc{\mathcal{Y}}
\newcommand{\Kc}{{\mathcal K}}
\def\Ysf{ {\sf Y}}
\def\Wsf{ { W}}
\def\Rach{R_{A}} 
\def\File{ {U}}
\begin{document}

\bibliographystyle{IEEEtran}


\title{On Coding for Cache-Aided Delivery of Dynamic Correlated Content}

\author{Parisa Hassanzadeh, Antonia M. Tulino, Jaime Llorca, Elza Erkip
\thanks{This work has been supported in part by NSF under Grant \#1619129 and in part by NYU WIRELESS.}
\thanks{P. Hassanzadeh  and  E. Erkip are with the ECE Department of New York University, Brooklyn, NY. Email: \{ph990, elza\}@nyu.edu}
\thanks{J. Llorca  and A. Tulino are with Nokia Bell Labs, Holmdel, NJ, USA. Email:  \{jaime.llorca, a.tulino\}@nokia-bell-labs.com}
\thanks{A. Tulino is with the DIETI, University of Naples Federico II, Italy. Email:  \{antoniamaria.tulino\}@unina.it}
}

\maketitle

\begin{abstract}
Cache-aided coded multicast leverages side information at wireless edge caches to efficiently serve multiple unicast demands
via common multicast transmissions, leading to load reductions that are proportional to the aggregate cache size. 
However, the increasingly dynamic, unpredictable, and personalized nature of the content that users consume challenges the efficiency of existing caching-based solutions in which only {\em exact} content reuse is explored. 
This paper generalizes the cache-aided coded multicast problem  to specifically account for the correlation among content files, such as, for example, the one between updated versions of dynamic data. 
It is shown that (i) caching content pieces based on their correlation with the rest of the library, and (ii) jointly compressing requested files using cached information as references during delivery, can provide load reductions that go beyond those achieved with existing schemes. This is accomplished via the design of a class of correlation-aware achievable schemes, shown to significantly outperform state-of-the-art correlation-unaware solutions. Our results show that as we move towards real-time and/or personalized media dominated services, where exact cache hits are almost non-existent but updates can exhibit high levels of correlation, network cached information can still be useful as references for network compression. 
\end{abstract}
\begin{IEEEkeywords}
Coded Caching, Coded Multicasting, Correlated Library, Dynamic Data,  Network Coding, Index Coding, Network Compression, Content Distribution
\end{IEEEkeywords}


\section{Introduction}~\label{sec:Introduction}

Proper distribution of popular content across wireless edge caches is emerging as a promising approach to address the exponentially growing traffic in current wireless networks. 
Recent studies have shown that, in a cache-aided network, exploiting globally cached information in order to multicast coded messages that are useful to a large number of receivers exhibits overall network load reductions that are proportional to the aggregate cache capacity. The fundamental rate-memory trade-off for a broadcast caching network in a variety of settings that includes worst-case demands, random demands, multiple per-user requests, heterogeneous channel conditions, and finite packetization, has been characterized in \cite{maddah14fundamental,maddah14decentralized,ji2017order,ji14average,ji14groupcast,ji15multiple,channel2016,shanmugam14finite,yu2016exact}. 
While these results are promising, most existing studies treat the network content as independent pieces of information, and do not account for the additional gains that can be obtained from the joint compression of correlated content distributed throughout the network. 
Exploiting content correlation becomes particularly critical as we move from static content distribution towards real-time delivery of rapidly changing (and aging), but highly correlated, personalized data (as in news updates, social networks, remote sensing, augmented reality, etc.) in which {\em exact} content reuse is almost non-existent \cite{iot2012, Antony17aoi}.

In this paper, we investigate how content correlations can be explored in order to improve the performance of cache-aided networks. 
We consider a network setup similar to \cite{maddah14fundamental,maddah14decentralized,ji2017order,ji14average,ji14groupcast,ji15multiple,channel2016,shanmugam14finite,yu2016exact}, but assume that the files in the library are generated according a joint distribution and are therefore correlated. 
Such correlations are especially relevant among content files of the same category, such as episodes of a TV show or same-sport recordings, which, even if personalized, may share common backgrounds and scene objects. 
In addition, such correlations are also fundamental in modeling dynamic data, where information exhibits rapid {\em aging} (i.e., loses relevance), and requires timely updates. This calls for new strategies that are able to efficiently exploit the correlation between multiple versions of dynamic data in order to enable both 
compressed delivery, as well as efficient updating of cached references.  

As in existing literature on cache-aided networks, we assume that the network operates in two phases: a caching (or placement) phase taking place at network setup followed by a delivery phase where the network is used repeatedly in order to satisfy receiver demands. The design of the caching and delivery phases forms what is referred to as a {\em caching scheme}. During the caching phase, caches are filled with content from the library according to a properly designed {\em caching distribution}. During the delivery phase, the sender compresses the set of requested files into a multicast codeword by computing an {\em index code} \cite{birk1998informed,IndexCoding}. The goal of this paper is to investigate the additional gains that can be obtained when accounting for the correlations that may exist among static files as well as multiple versions of dynamic data, when designing both caching and delivery phases.

\subsection{Related Work}
There are only a few works that study the cache-aided broadcast network for a library composed of correlated files \cite{timo2018rate,hassanzadeh2017rate,Asilomar2017,ISITjournal,ITW2016,yang2017centralized,ISTC2016}. 
The works in \cite{timo2018rate,hassanzadeh2017rate,Asilomar2017,ISITjournal} analyze the rate-memory trade-off in the correlated setting via information theoretic approaches. Timo et. al.
\cite{timo2018rate} provide rate-memory-distortion trade-offs for a lossy reconstruction setting with two receivers, multiple files, and a single cache, for which local caching gains are exploited. The authors determine the information common to all sources to be the most useful content to be placed in the cache. In order to capitalize on the global caching gains arising from multiple caches in the network, the rate-memory trade-off under lossless reconstruction for the two-receiver, two-cache, two-file network was studied in \cite{hassanzadeh2017rate}, and extended to multiple files and multiple receivers in \cite{Asilomar2017,ISITjournal}. 
In these papers, an achievable two-step scheme is proposed to exploit the correlation by jointly compressing the library files prior to the caching phase based on the Gray-Wyner network \cite{gray1974source}, and then treating the compressed content as independent files. It is shown that this strategy is optimal for a large memory regime, while the gap to optimality is quantified for other memory values. However, the exponential complexity of Gray-Wyner source coding  makes the overall characterization with large number of files difficult. The work in \cite{ITW2016} addresses this complexity by proposing an achievable scheme that takes on a more practical approach to finding the information common to a set of files. To this end, the library files are grouped into sets of correlated files, and each file is compressed with respect to the file in its group that leads to the highest compression.  In \cite{yang2017centralized}, the authors propose a caching scheme for arbitrary number of files and receivers, where the library has a special correlation structure, i.e., each library file is composed of  multiple  independent subfiles  that are common among a fix set of files in the library.  They numerically compare the performance of their scheme with a lower bound on the worst-case rate-memory trade-off.

Even though jointly compressing the library before the caching phase, as in \cite{timo2018rate,hassanzadeh2017rate,Asilomar2017,ISITjournal,ITW2016,yang2017centralized}, is a natural solution to reducing the load on the shared link, this approach is mainly suitable for a library with static content, losing robustness and leading to significant performance degradation when content is highly dynamic. For example, if the library is updated and a new file is added, all files need to be jointly re-compressed and the content cached across the entire network needs to be updated based on the new compressed versions. On the other hand, schemes that are not based on a-priori library compression may only require updating a portion of the network cached content, which is  especially critical for the delivery of dynamic data in next generation services.

\subsection{Contributions}
Motivated by the increasing dynamic and personalized nature of next generation content services, in this paper we consider the problem of efficient caching and delivery of correlated content in a general setting with multiple receivers and files and arbitrary joint file distribution. Our goal is to design a practical scheme that is robust to the variations inherent to the delivery of real-time services over wireless networks.  To this end, we propose an achievable scheme that stores individually (rather than jointly) compressed  content pieces during the caching phase, which then serve as references for compression during the delivery phase,   resulting in an efficient scheme that is robust to dynamic changes in the content library. Compared with previously proposed schemes \cite{timo2018rate,hassanzadeh2017rate,Asilomar2017,ISITjournal,ITW2016,yang2017centralized}, where correlation-aware compression takes place before the caching phase, our scheme provides on-demand compression during the delivery phase and hence is able to adapt to system variations while continuing to exploit available content correlations. This paper provides a comprehensive analysis of the proposed scheme, for which preliminary results were presented in \cite{ISTC2016}.

Our main contributions are summarized as follows:
\begin{itemize}
	
	\item We formulate the problem of efficient delivery of {\em dynamic} and {\em correlated} sources over a broadcast caching network via information-theoretic tools.  
	
	\item We propose a correlation-aware scheme that consists of (i) storing individually compressed content pieces based on their popularity as well as on their correlation with the rest of the library in the caching phase, and (ii) sending compressed versions of the requested files according to the information distributed throughout the network and their joint statistics during the delivery phase.   By using an individually compressed caching phase, and providing {\em on-demand} compression during the delivery phase, our scheme is robust to changes in the content library, particularly relevant in next generation dynamic content services.  	
	
	\item We analyze the performance of our proposed correlation-aware scheme for two settings: $i)$ a {\em static} setting, in which we assume that the same (correlated) content library is used during both the caching and delivery phases,  and $ii)$ a {\em dynamic} setting, in which  an updated version of the content library may become available during the delivery phase.  
	
	\item We characterize an upper bound on the expected rate of the proposed correlation-aware scheme in a  network with arbitrary number of files and receivers, and numerically compare the achievable rate 	with that of existing correlation-unaware schemes, which confirm the additional available gains, especially for small memory sizes. 
	In addition, for the special case of the two-file two-receiver network considered in \cite{hassanzadeh2017rate,ISITjournal}, we present a scheme based on a more structured cache placement that achieves a rate within half of
	the mutual information of the two files for all cache sizes.
\end{itemize}

\subsection{Paper Organization}
The paper is organized as follows. 
Sec. \ref{sec:Problem Formulation} presents the information-theoretic problem formulation. Sec.~\ref{sec:ach scheme} provides a detailed description of the proposed achievable scheme. We upper bound the expected rate-memory trade-off of the proposed scheme with randomized and deterministic cache placements in Secs.~\ref{Sec: upper bound} and \ref{Sec: 2user2file}, respectively. The performance of the scheme is numerically validated in Sec. \ref{sec:Simulations}, and the conclusion follows in Sec. \ref{sec:Conclusion}.

\section{Network Model and Problem Formulation}\label{sec:Problem Formulation}
We consider a broadcast caching network composed of one sender (e.g., base station) with access to a library composed of $N$ files, generated by an $N$-component discrete memoryless source (N-DMS). 
The files, denoted by $\{W_1^F,\dots, W_N^F\}$, are of length $F$, with file $n \in \{1,\dots,N\}$ denoted by  
$\Wsf_n^F = [\Wsf_{n}(1),\dots,\Wsf_{n}(F)]$. 
For $i\in\{1,\dots F\}$, the $N$-dimensional vectors  $[\Wsf_{1}(i),\dots,\Wsf_{N}(i)]$, which represent the  $i^\text{th}$ elements of the files, are independently and identically distributed (i.i.d.) according to $[W_1,\dots,W_N]\sim p({\bf w})$, with the probability mass function (pmf) $p({\bf w})=p (w_1, \dots, w_N)$ defined over an alphabet $\Wc^N$. 
%
Without loss of generality, we assume that $H(W_n)=H(W)$ for $n\in\{1,\dots,N\}$. 
The sender communicates with $K$ receivers (e.g., access points or user devices) $\{1,\dots,K\}$, through a shared error-free multicast link. Each receiver has a cache of size $MF$ bits, where $M\in[0,\,NH(W)]$ denotes the (normalized) cache capacity. The network operates in two phases:
\begin{itemize}
	\item[$i)$] A caching phase that takes place at the network setup, 
	and is assumed to happen during off-peak hours without consuming actual delivery rate.
	During this phase,  the receivers fill their caches from $\{W_1^F,\dots, W_N^F\}$, referred to as the {\em original library}. 
	
	\item[$ii)$] {A delivery phase where the network is repeatedly used,  and library files can be dynamically updated.  
		The most recent versions of the files are denoted by $\{V_1^F,\dots, V_N^F\}$. For each $i\in\{1,\dots,F\}$, conditioned on $[\Wsf_{1}(i),\dots,\Wsf_{N}(i)]$, the vectors  $[V_{1}(i),\dots,V_{N}(i)]$ are  i.i.d. 
		according to $ [V_1,\dots, V_N]\sim p({\bf v}|{\bf w})=p(v_1,\dots,v_N|{\bf w})$. 
		Without loss of generality, we assume the most recent versions have the same alphabets as the original files.  Availability of the most recent content at the server is subject to network delays, and as a result, we assume that only with probability $\pi_n$, the server observes the most recent version of file $n\in\{1, \ldots, N\}$. Hence the {\em updated library}, denoted by $\{\File_1^F,\dots, \File_N^F\}$, is such that $\File_n^F=V_n^F$ with probability $\pi_n$ and $\File_n^F=W_n^F$ with probability $1-\pi_n$, $n\in\{1, \ldots, N\}$, where availability of the most recent content at the server is modeled as an independent process across files.   
		We also assume that in different uses of the network, the most recent versions are obtained conditionally i.i.d. with respect to the original library, and files are updated  at the server (with respect to the cached versions) in an i.i.d fashion. The {\em library distribution} is then characterized by $p({\bf w},{\bf v})$ and ${\bm \pi}=(\pi_1, \ldots, \pi_N)$.} Receivers request files in an i.i.d. manner according to a uniform demand distribution, and the sender satisfies the demands using the updated library. The demand realization is denoted by $\dbf=(d_1,\dots,d_K)$, where $\dbf\in \mathcal D \equiv \{1,\dots,N\}^K$. 
\end{itemize}
A caching scheme for this network consists of:
\begin{itemize}
	\item {\textbf{Cache Encoder:}} The cache encoder at the sender computes the content to be cached at receiver $k \in \{1,\dots,K\}$, denoted by $Z_{k}$, using the set of functions   
	$f^{\mathfrak  C}_{k}:   {\mathcal W}^{NF} \rightarrow [1: {  2^{MF}}  ),$
	as
	$$Z_{k} = f^{\mathfrak  C}_{k}\Big(  W_1^F,\dots,W_N^F  \Big).$$
	The encoder designs the cache configuration $\{Z_{1},\dots,Z_{K}\}$ jointly across receivers, taking into account available global system knowledge such as, the number of receivers and their cache sizes, the number of files, and the pmf $p({\bf w})$.  
	
	\item{\textbf{Multicast Encoder:}} 
	After the caches are populated, the network is repeatedly used. At each use, a demand $\dbf$ from the updated library, $\{\File_1^F,\dots,\File_N^F\}$, is revealed to the sender. The multicast encoder uses a fixed-to-variable encoding function
	$$f^{\mathfrak  M}: {\mathcal D}   \times  {\mathcal W}^{NF} \times [1: 2^{MF})^K \rightarrow \Yc^\star,$$
	to generate a codeword,  $Y_{\dbf}$, according to the demand realization $\dbf$, the updated library $\{\File_1^F,\dots,\File_N^F\}$, the cache configuration $\{Z_{1},\dots,Z_{K}\}$, and the library distribution. Then,
	$$Y_{\dbf} = f^{\mathfrak  M}\Big( \dbf ,\, \{ {\File_n^F\}_{n=1}^N},\,   \{Z_{k}\}_{k=1}^K \Big)  ,$$ 
	where $\Yc^\star$ denotes the set of finite length sequences.
	
	\item{\textbf{Multicast Decoders:}} Receivers recover their requested files using their cached content and the received
	multicast codeword. Receiver ${k}$ recovers $\File^F_{d_{k}}$ using decoding function $h^{\mathfrak  M}_{k} :
	\mathcal D \times \Yc^\star \times [1: 2^{MF}) \rightarrow \Wc^F $, as
	$${\widehat{\File}^F_{d_{k}} }= h^{\mathfrak  M}_{k}(\dbf, Y_{\dbf} ,Z_{k})    .$$
\end{itemize}
We refer to the overall scheme as a {\em cache-aided coded multicast (CACM)} scheme. The worst-case probability of error of the corresponding CACM scheme is defined as
\begin{align} \label{perr}
& P_e^{(F)} = \max_{\dbf \in \mathcal D}\,   \PP \left( \bigcup\limits_{  k\in\{1,\dots,K\}}\, \Big\{\widehat{\File}^F_{d_{k}}  \neq \File^F_{d_{k}}  \Big\} \right), \notag
\end{align}
which is averaged over the  library distribution. 
In line with previous work \cite{ji2017order,ji14average,ji14groupcast},  the  average multicast delivery rate  (or load) of the overall scheme, over all demands, is defined as
\begin{equation} \label{average-rate}
R^{(F)} =    \frac{\EE[L(Y_{\dbf})]}{F}.
\end{equation}
where $L(Y)$ denotes the length (in bits) of the multicast codeword $Y$, and the expectation is over the demands and the library files used during caching and delivery phases.

\begin{definition} \label{def:achievable-rate}
	A rate-memory pair $(R,M)$ is {\em achievable} if there exists a sequence of caching schemes for memory $M$ and
	increasing file size $F$ such that {$\lim_{F \rightarrow \infty} P_e^{(F)} = 0 \notag$, and $\limsup_{F \rightarrow \infty}
		R^{(F)} \leq  R.$}
\end{definition}
\begin{definition} \label{def:infimum-rate}
	The rate-memory region is the closure of the set of achievable rate-memory pairs $(R,M)$. The optimal rate-memory function, $R^*(M)$, is the
	infimum of all rates $R$ such that $(R,M)$ is in the rate-memory region for memory $M$.
\end{definition}

The goal of this paper is to design CACM schemes that result in a small achievable rate $R$ for a given memory $M$. We propose a class of CACM schemes in Sec.~\ref{sec:ach scheme} for the general system model provided here. Later in the paper, we analyze the performance of the proposed scheme by considering the following two settings:
\begin{itemize}
	\item[-] {\em \bf Static Setting}: In this setting, the library remains static throughout the network setup and its consecutive uses, such that content is requested and delivered from the same library that is used in the caching phase; that is, $\pi_n=0$, $n\in\{1, \ldots, N\}$.  Hence, in this scenario the library is composed of static correlated content, as in \cite{hassanzadeh2017rate,Asilomar2017,ISITjournal,timo2018rate,ITW2016,yang2017centralized}, and when particularized to independent files, the setting is equivalent to that of most prior works, including
	\cite{maddah14fundamental,maddah14decentralized,ji2017order}, and \cite{ji14groupcast,ji15multiple,channel2016,shanmugam14finite}.

	\item[-] {\em \bf Dynamic Setting}: In this case, for ease of exposition, we assume that the library used during the caching phase is originally composed of $N$ independent files, i.e.,  $p({\bf w}) = \prod\limits_{n=1}^N p(w_n)$, and the most recent versions of the files are generated such that $p({\bf v}|{\bf w}) = \prod\limits_{n=1}^N p(v_n|w_n)$, but otherwise the library distribution follows the general model described in this section.

\end{itemize}

\section{Proposed Correlation-Aware Scheme}\label{sec:ach scheme}
In this section, we introduce a class of correlation-aware schemes, which distribute the most relevant library content among the receiver caches such that during the delivery phase, the correlation among the aggregate cache and the demand can be exploited to send multicast codewords of compressed files that  further improve the global caching gain. In the following, we refer to the overall scheme as {\em correlation-aware CACM} {(CA-CACM)}. Before providing the general description of the proposed CA-CACM scheme, we first illustrate the main idea of how correlation is exploited during the delivery phase through a simple example in a dynamic setting.

\vspace{2mm}
{\bf Motivating Example:}

Consider a network with two receivers, two independent files $\{W_1^F ,W_2^F \}$, cache capacity $M=1$, and cached content
\begin{align}
Z_1 =\{W_{1,1},\,W_{2,1} \},\quad Z_2 =\{ W_{1,2},\,W_{2,2} \},\label{eq:cache dynamic}
\end{align}	
where $W_{i,j}$ denotes the $j^\text{th}$ half of file $W_i^F$ with length $F/2$ bits. Cache configuration \eqref{eq:cache dynamic}, in which each receiver stores an exclusive part of each file, is optimal for the static system with independent files considered in \cite{maddah14fundamental}. With such caching, a single coded transmission over the shared link helps both receivers to effectively exchange the missing packet available in the cache of the other receiver.  
During the delivery phase,  with probability $\pi_1=\pi_2=1$, new versions of the files become available, $\{\File_1^F ,\File_2^F \}$, which are highly correlated with the versions used during the caching phase, such that $H(\File_1|W_1)=H(\File_2|W_2 ) = 0.5$. As a worst-case demand, assume user 1 requests file $\File_1^F$ and user 2 requests file $\File_2^F$. Since none of the requested files are cached, there are no exact matches between the cached content and the requested files, and hence, correlation-unaware schemes can not utilize the cached content. Similarly, the correlation-aware scheme proposed in \cite{ISITjournal} falls short since the original library used for caching was composed of independent files.  For either of the schemes, the requested files must be delivered via separate uncoded transmissions $\File_1^F$ and $\File_2^F$, yielding a delivery load equal to $R=2$. However, {\em aware} of the correlation among the requested files and the cached content, the  sender can transmit the codeword  $W_{1,2}\oplus W_{2,1}$, concatenated with two refinements with rates $H(\File_1|W_1 )$ and $H(\File_2|W_2 )$. Then: i) receiver 1 is able to losslessly recover $\File_{1}^F$ from its decoded message $W_{1,2}$ and stored message $W_{1,1}$, and ii) receiver 2 is able to recover $\File_{2}^F$ from its decoded message $W_{2,2}$ and stored message $W_{2,1}$. This results in a total load of $R= 0.5+ H(\File_1|W_1) + H(\File_2|W_2 )=1.5$.

From the example, it is observed that placing {\em individually} compressed content pieces during the caching phase, which serve as references for compression during the delivery phase, enables {\em on-demand} compression suitable for dynamic correlated content. The next section describes the scheme, which is general enough to exploit correlation in static settings, as the one in Sec.~\ref{Sec: 2user2file}, and to provide robustness in dynamic settings such as the one given above.

\vspace{1mm}
{\bf Proposed Scheme:}

A CA-CACM scheme, as depicted in Fig.~\ref{fig:encoderblock2}, is composed of the following components (described in more detail in the following subsections): 

\begin{itemize}
	\item[$i)$]{\em Correlation-Aware Cache Encoder}: As in conventional CACM schemes, a fractional cache encoder is used to divide each file into packets and determine the subset of each file's packets to be cached at each receiver. This can be done in a {\em deterministic} fashion, referred to as {centralized caching} in the literature, as in \cite{maddah14fundamental,ji14groupcast}, and \cite{yu2016exact}, or, in order to be more robust to system dynamics, in a {\em random} fashion, referred to as decentralized caching, as in \cite{maddah14decentralized,ji2017order}, and \cite{ji15multiple,channel2016,shanmugam14finite}, where the packets to be cached are selected according to a {\em caching distribution}. 
	The caching distribution is optimized in order to minimize the achievable delivery rate with respect to the library distribution, as a function of not only file popularity, but also their cross-correlation. 		
	\item[$ii)$] {\em Correlation-Aware Multicast Encoder:}
	\begin{itemize}
		\item[$\bullet$]{\em Correlation-Aware Grouping}: Given a demand realization in the delivery phase, the multicast encoder identifies all the packets among the collective cache and demand that are correlated with each requested packet that is not locally available. Any of the identified correlated packets can be used in the multicast codeword in place of the requested packet, subject to transmitting additional codewords that ensure lossless reconstruction of the requested packet. 
		\item[$\bullet$]{\em Correlation-Aware Group Coloring}: The multicast encoder generates a multicast codeword constructed as a linear combination of the requested packets and their correlated packets concatenated with all necessary refinements, such that all receivers can losslessly recover their demands. This is done by 
		computing a {\em group coloring} of a graph that results from augmenting the index coding conflict graph \cite{birk1998informed,IndexCoding} with the correlated ensembles computed by the correlation-aware grouping procedure. In the following, we refer to this graph as the {\em augmented conflict graph}.
	\end{itemize}	
\end{itemize}

\begin{figure}
	\centering
	\includegraphics[width= 0.72\linewidth]{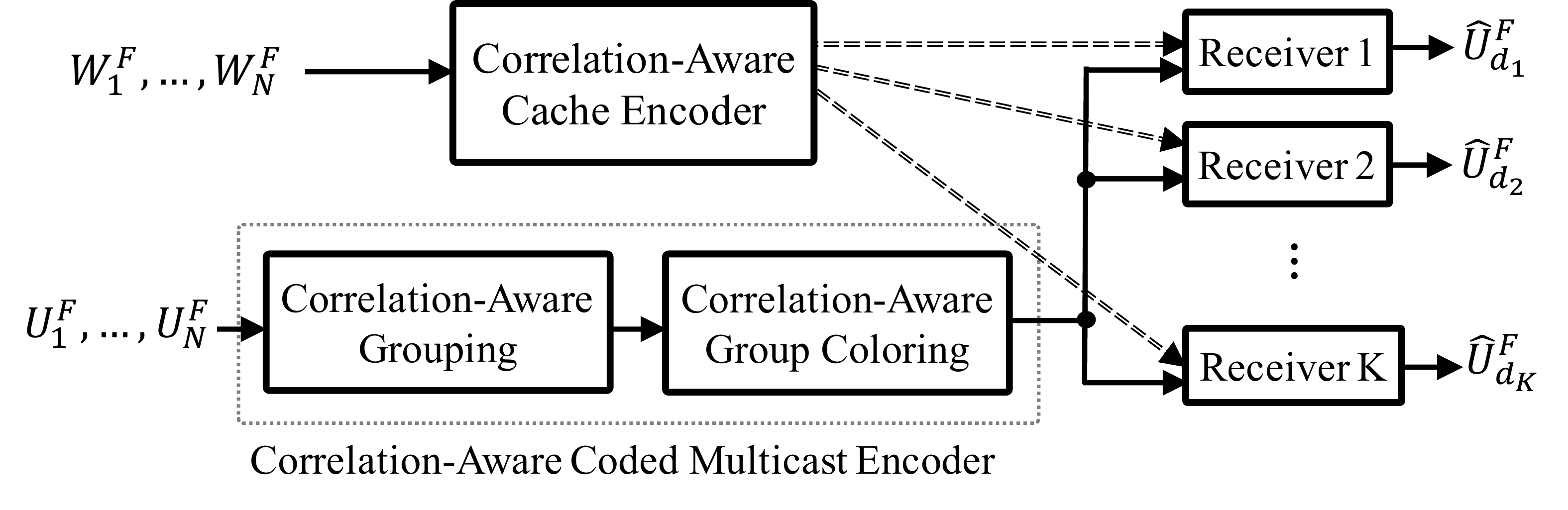}
	\caption{The proposed correlation-aware CACM scheme.}
	\label{fig:encoderblock2}
\end{figure}

\vspace{2mm}
\subsection{Correlation-Aware Cache Encoder}\label{subsec:CA RAP} 
As stated earlier, the correlation-aware cache encoder can be designed in a deterministic or random fashion.  Differently from the random cache placement, where only  the portion of packets to be cached from each file is stated, deterministic cache placement also specifies their identity, and therefore, results in better performance  compared to random caching. 
For a detailed description of the deterministic caching we refer the reader to Sec.~\ref{Sec: 2user2file}, and in this section, we describe the random version of the proposed cache encoder, which we refer to as the random fractional caching strategy. In addition to providing increased robustness to system dynamics, this strategy also allows us to analytically derive an upper bound on the  rate achieved by the proposed CA-CACM scheme in Sec.~\ref{Sec: upper bound}.

Each file $W_{n}^F$ is partitioned into $B$ equal-size packets\footnote{For example, in video applications, a packet may represent a video block or frame.}, where packet $b\in\{1,\dots,B\}$ of file $n \in \{1,\dots,N\}$ is denoted by $W_{n,b}$. The content to be cached at each receiver is determined according to a \textit{caching distribution}, ${\bm \varrho} = (\varrho_{1}, \dots, \varrho_{N})$ with $0 \leq \varrho_{n} \leq 1/M$, $\forall n \in\{1,\dots,N\}$ and $\sum_{n = 1}^{N} \varrho_{n} = 1$, which is optimized to minimize the rate of the corresponding index coding delivery scheme. For a given caching distribution $\bm \varrho$, each receiver randomly selects and caches a subset of $\varrho_{n}MB$ distinct packets from file $n \in \{1,\dots,N\}$. We denote by $\Cbf=\{ \Cbf_1,\dots,\Cbf_K\}$ the packet-level cache configuration, where $\Cbf_k$ denotes the set of packets $W_{n,b}$, $n\in\{1,\dots,N\}$, $b\in\{1,\dots,B\}$, cached at receiver $k$. 
While the caching distribution of a correlation-unaware scheme prioritizes the caching of packets according to the aggregate popularity distribution (see \cite{ji14average, ji2017order}), the correlation-aware caching distribution accounts for both the aggregate popularity and the correlation among the library files when determining the amount of packets to be cached from each file.

\subsection{Correlation-Aware Multicast Encoder}\label{subsec:CA Delivery} 

The correlation-aware coded multicast encoder capitalizes on the additional coded multicast opportunities that arise from incorporating cached packets that are, not only equal to, but also correlated with the requested packets into the multicast codeword. For a given demand realization $\dbf$, the packet-level demand realization is denoted by $\Qbf=\{ \Qbf_1,\dots,\Qbf_K \}$, where $\Qbf_{k}$ denotes the packets of file $\File_{d_k}^F$ requested, but not cached,  by receiver $k$. 

{\bf Correlation-Aware Grouping.} For each requested packet $\File_{n,b}\in\Qbf$,  
the correlation-aware grouping procedure computes a {\em $\delta$-ensemble} $\Omega_{U_{n,b}}$, where $\Omega_{U_{n,b}}$ is the union of $\File_{n,b}$ and the subset of all cached and requested packets in $\Cbf\cup\Qbf$ that are $\delta$-correlated with $\File_{n,b}$, as per the following definition.

\begin{definition}({\bf $\bm\delta$-Correlated Vectors})\label{def:delta-cor vec}
	Consider two random vectors ${\bf X}=[X_1, \ldots, X_L]$ and ${\bf Y}=[Y_1, \ldots, Y_L]$ with i.i.d. entries such that $(X_i, Y_i) \sim p_{X,Y}(x,y)$, and $H(X)=H(Y)$.  For a given threshold $\delta\leq1$, we say that $\bf X$  is $\delta$-correlated with $\bf Y$ if 
	$H(X,Y)\leq (1+\delta) H(X)$. 
\end{definition}

\begin{remark}\label{remark1}
	For the original and updated libraries generated as in Sec.~\ref{sec:Problem Formulation}, and any $\delta<1$, it follows from Definition~\ref{def:delta-cor vec} that packets ${W}_{n,b}$ and ${W}_{n',b'}$ are never $\delta$-correlated when $b\neq b'$, and for $b=b'$ they are $\delta$-correlated if $H({ W}_{n},{ W}_{n'}) \leq (1+\delta)  H(W)$. Similarly $\delta$-correlation can be defined across packets ${\File}_{n,b}$ and ${\File}_{n',b}$, and across packets $W_{n,b}$ and ${\File}_{n',b}$.
\end{remark}

{We note that $\delta$  is a system parameter that can be optimized
	as a function of the system parameters $K,N,M$, and the library distribution, in
	order to minimize the rate of the overall CA-CACM scheme as quantified in \eqref{average-rate}.\footnote{In practice, the designer shall determine the level at which to compute and exploit correlations between files based on performance-complexity trade-offs.} 
	In fact, based on Definition \ref{def:delta-cor vec},  higher $\delta$, allows for more packets to be $\delta$-correlated, but with higher refinements.}

{\bf Correlation-Aware Group Coloring.} 
After computing the $\delta$-ensembles for each requested packet, the multicast encoder computes the multicast codeword by following a group coloring procedure on an {\em augmented} conflict graph. 
For completeness, we first describe the conventional index coding conflict graph. A conventional index coding conflict graph, which is the complement of the side information graph \cite{birk1998informed,IndexCoding}, is constructed based on the packet-level cache configuration $\Cbf$ and demand realization $\Qbf$ 
in such a way that there is one vertex for each requested packet in $\Qbf$, and there is an edge between any two vertices if one of the packets is not  cached by the receiver requesting the other packet. 
Such construction ensures decodability of an index code constructed by first computing a valid coloring of the conflict graph (see Definition \ref{def:color}), and then XORing the packets with the same color. As explained next, the augmented conflict graph is constructed by augmenting the conventional index coding conflict graph with the $\delta$-ensembles computed by the grouping procedure. 
\subsubsection*{I) Augmented Conflict Graph}
The augmented conflict graph $\mathcal H_{\Cbf,\Qbf}=(\Vc, \Ec)$ is made up of  a vertex set $\Vc$  and an edge set $\Ec$:
\begin{itemize}
	\item Vertex set $\Vc$: The vertex set $ \Vc=\Vc_r\cup\widetilde{\Vc}$ is composed of root nodes
	$\Vc_r$ and virtual nodes $\widetilde{\Vc}$. 
	\begin{itemize}
		
		\item Root Nodes: There is a root node $ v_r\in\Vc_r$ for each requested packet  $\File_{n,b} \in  \Qbf$, uniquely identified by the triplet  $\Big( \rho(v_r),\mu(v_r), r(v_r)\Big)$,  with $\rho(v_r)$ denoting the packet
		identity, $\File_{n,b}$, 
		and  $\mu(v_r)$ denoting the receiver requesting it. The third component $r( v_r) = v_r$ is introduced for ease of exposition when presenting the scheme with the virtual nodes defined next. 
		\item Virtual Nodes: For each root node $v_r \in\Vc_r$,  all the packets in the $\delta$-ensemble
		$\Omega_{\rho(v_r)}$ other than $\rho(v_r)$ are represented as virtual nodes in $\widetilde{\mathcal V}$. We
		identify virtual node $\tilde v\in\widetilde{\Vc}$, having $v_r$ as a root note, with the triplet $\Big( 
		\rho(\tilde v), \mu(v_r),r(\tilde v)\Big)$, where $\rho(\tilde v)$ indicates the 
		identity of the $\delta$-correlated packet associated with 
		vertex $\tilde v$, $\mu(v_r)$ indicates the receiver requesting $\rho(v_r)$, and  $r(\tilde v) = v_r$ is the
		root of the $\delta$-ensemble to which $\tilde v$ belongs.
	\end{itemize}
	Based on the classification of vertices into root nodes and virtual nodes, the vertex set $\mathcal V$ is partitioned into groups, such that there is one group for each root node. 
	Given a root node $v_r$, its {\em group}, denoted by $\Gc_{v_r} \subseteq \Vc$, is composed of root node $v_r$ and all virtual nodes corresponding to the packets in its $\delta$-ensemble  $\Omega_{\rho(v_r)}$. In other words, group $\Gc_{v_r}$ represents packet $\rho(v_r)$ and all packets $\delta$-correlated with it.

	\item Edge set $\Ec$: For any pair of vertices $v_1, v_2 \in \Vc $, there is an edge between $v_1$ and $v_2$ in $\Ec$  if  both $v_1$ and  $v_2$ are in the same group, or if the two following conditions are jointly satisfied:  1) $\rho(v_1) \neq
	\rho(v_2)$; 2) packet $\rho(v_1) \notin \Cbf_{\mu(v_2)}$ or packet $\rho(v_2) \notin \Cbf_{\mu(v_1)}$.  The basic idea is that a pair of vertices are connected with an edge, if their corresponding packets conflict (interfere) with each other, and hence, can not be sent together in a single transmission.
	
\end{itemize}
\begin{remark} 
	\label{remark2} 
	From the construction of the augmented conflict graph, it immediately follows that the conventional index coding conflict graph
	is the subgraph of $\mathcal H_{\Cbf,\Qbf}$ resulting from considering only $\Vc_r$.
\end{remark}
\begin{remark}\label{remark4}  If correlation is not considered or non-existent, each group is only composed of the root node, and hence $\mathcal V = \Vc_r$. Then, the augmented conflict graph is equivalent to the conventional index coding conflict graph \cite{ji2017order}.
\end{remark}

\begin{example}\label{ex:coloring}  
	Consider a network with $K=3$ receivers, cache capacity $M=3$, and a dynamic library composed of $N=6$ uniformly popular files $\{W_1^F,W_2^F,\dots,W_6^F\}$, with $H(W)=1$, and update probability vector ${\bm \pi}=(1,0,\dots,0)$. We assume that the files within each pair $\{W_1^F ,W_2^F\}$ and $\{W_5^F, W_6^F\}$ are correlated, while all other file are independent. Each file is divided into $B=4$ packets, and the packets of correlated files are $\delta$-correlated as described in Remark \ref{remark1}. The packet-level cache configuration is given in  Fig.~\ref{fig:examples} (a). During the delivery phase, an updated version of the first file becomes available which is correlated with $W_1^F$, and independent of all other files. For demand $\dbf = (1,3,5)$ the correlation-aware grouping component computes the following $\delta$-ensembles: 
	\begin{align}
	&\Omega_{\File_{1,1}} = \{\File_{1,1},W_{1,1}\}, \; \Omega_{\File_{1,2}} = \{\File_{1,2},W_{1,2}\}, \;\Omega_{\File_{1,3}} = \{\File_{1,3},W_{1,3}\}, \; \Omega_{\File_{1,4}} = \{\File_{1,4},W_{1,4}\}, \notag\\
	&\Omega_{\File_{3,1}} = \{\File_{3,1}\},\; \Omega_{\File_{3,2}} = \{\File_{3,2}\},\; \notag\\
	& \Omega_{\File_{5,2}} = \{\File_{5,2},W_{6,2}\},\; \Omega_{\File_{5,4}} = \{\File_{5,4},W_{6,4}\}  ,\notag
	\end{align}
	where for the updated library we have $\File_{3,1}=W_{3,1}$, $\File_{3,2}=W_{3,2}$, $\File_{5,2}=W_{5,2}$, and $\File_{5,4}=W_{5,4}$.  
	Given the packet-level cache and demand configurations, the root node set, $\Vc_r = \{v_1, \dots, v_8\}$, is composed of
	
	\noindent \resizebox{\linewidth}{!}{ 
		\begin{minipage}{\linewidth}
			\begin{align}
			& v_1: \Big( \File_{1,1},\,  1,\, v_1\Big),
			\, v_2:  \Big( \File_{1,2},\,  1,\, v_2\Big) , 
			\, v_3: \Big( \File_{1,3},\,  1,\, v_3\Big) ,   
			\, v_4: \Big( \File_{1,4},\,  1,\, v_4\Big) , \notag\\ 
			& v_5: \Big( \File_{3,1},\,  2,\, v_5\Big)  ,
			\,v_6: \Big( \File_{3,2},\,  2,\, v_6\Big) , \notag\\
			& v_7: \Big( \File_{5,2},\,  3,\, v_7\Big) , 
			\, v_8: \Big( \File_{5,4},\,  3,\, v_8\Big)  , \notag			
			\end{align}
		\end{minipage}
	}
	The corresponding augmented conflict graph $\mathcal H_{\Cbf,\Qbf}$ with vertices $\Vc =\Vc_r\cup
	\{\tilde v_{1},\tilde v_{2},\tilde v_{3},\tilde v_{4},\tilde v_{7},\tilde v_{8}\}$ is shown in Fig. \ref{fig:examples} (b), where for the virtual nodes we have
	\begin{align}
	&\tilde v_1: \Big( W_{1,1},\,  1,\, v_1\Big),
	\, \tilde v_2:  \Big( W_{1,2},\,  1,\, v_2\Big) , 
	\, \tilde v_3: \Big( W_{1,3},\,  1,\, v_3\Big) ,  
	\, \tilde v_4: \Big( W_{1,4},\,  1,\, v_4\Big) , \notag\\
	& \tilde v_7: \Big( W_{6,2},\,  3,\, v_7\Big)  ,
	\,\tilde v_8: \Big( W_{6,4},\,  3,\, v_8\Big) . \notag
	\end{align}

\end{example}

\begin{figure*}
	\begin{minipage}[b]{.5\linewidth}\centering
		\includegraphics[width=0.75\textwidth]{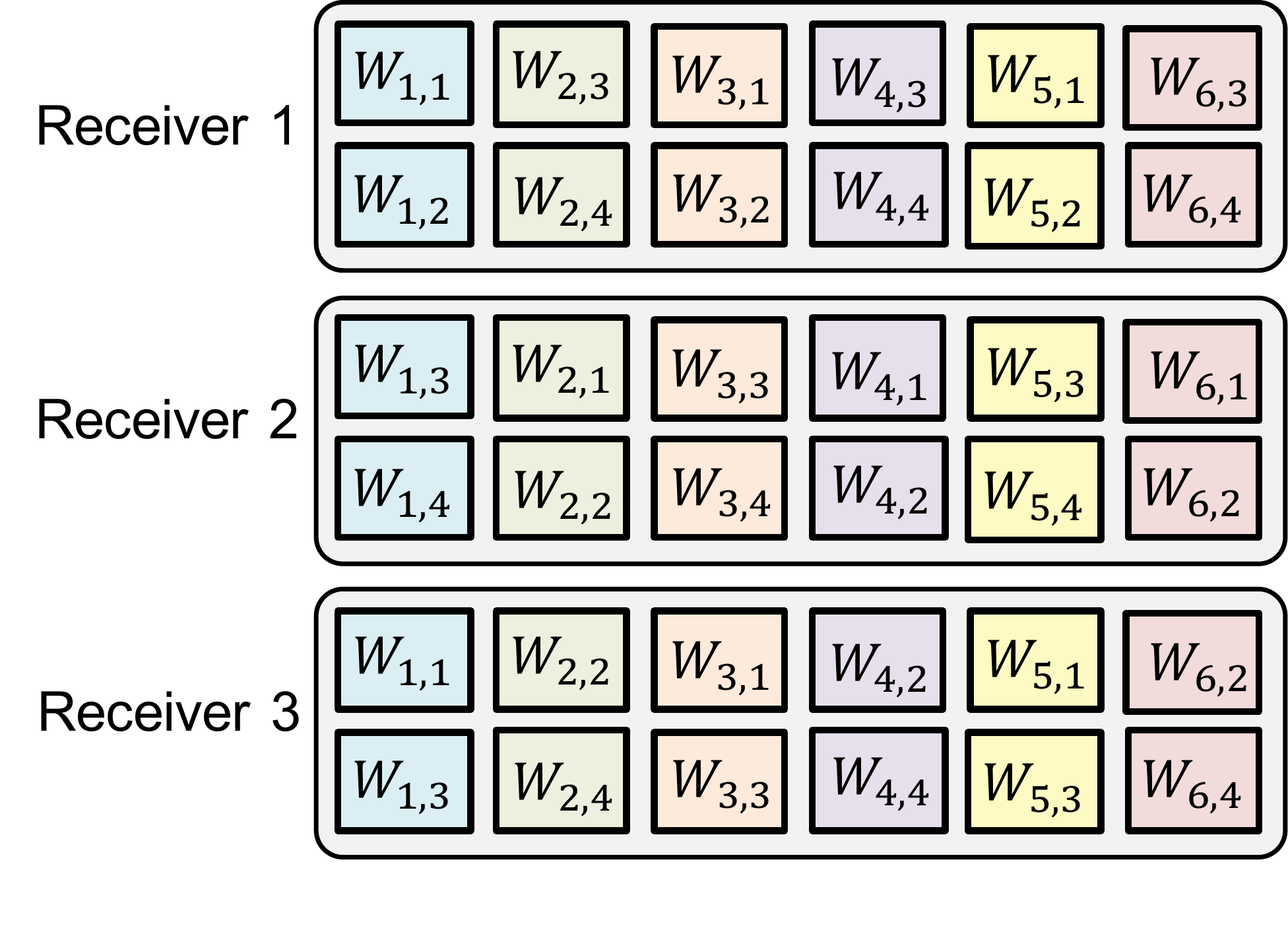}
		\subcaption{Cache Configuration}
	\end{minipage}
	\begin{minipage}[b]{.5 \linewidth}\centering
		\includegraphics[width=0.85\textwidth]{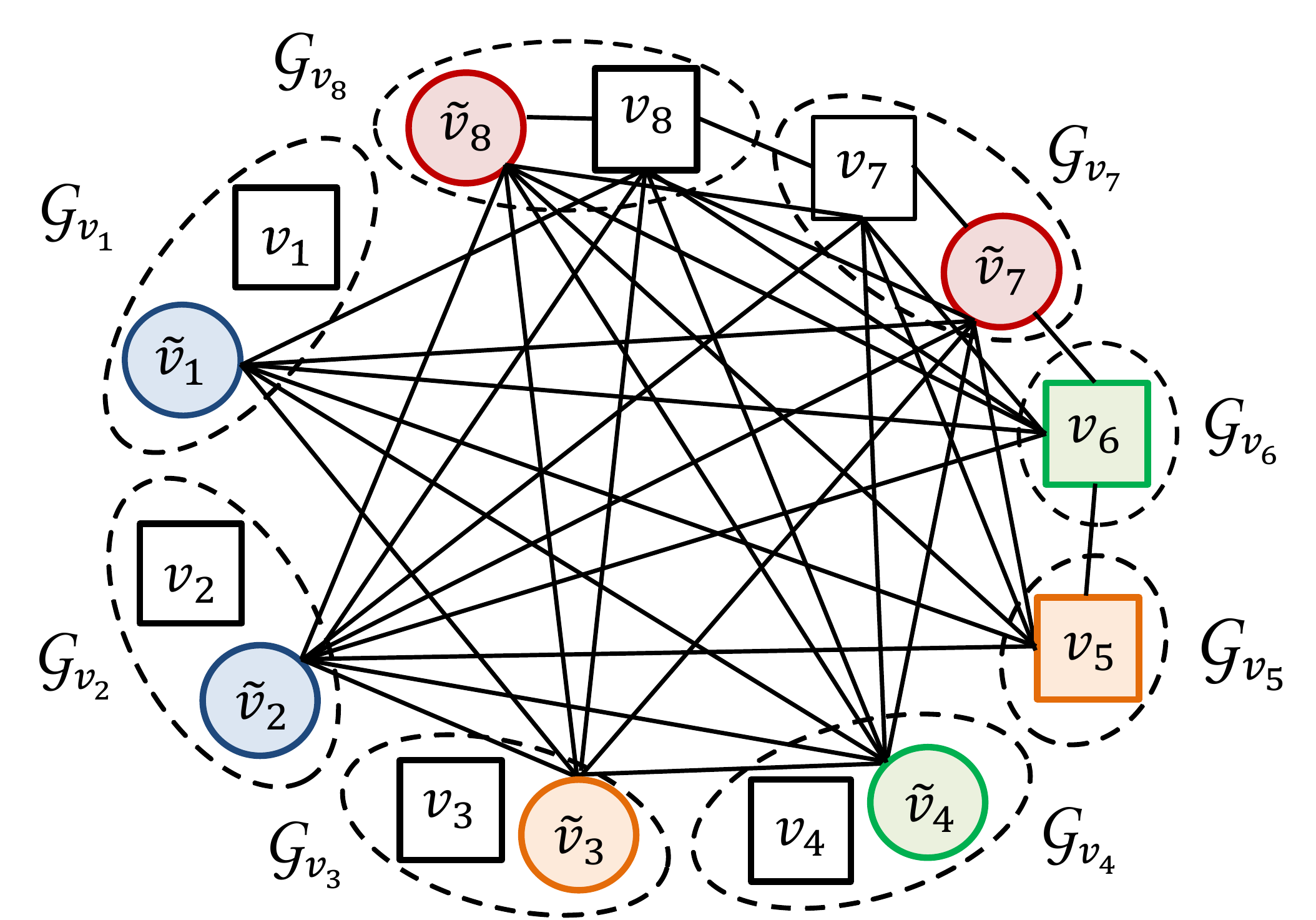}
		\subcaption{Correlation-Aware Group Coloring}
	\end{minipage}
	\captionsetup{width=.95\linewidth}
	\caption{Example \ref{ex:coloring}: (a) Packet-level cache configuration. (b) Coloring of the augmented conflict graph. Groups are delineated with dashed ovals, root nodes are represented by squares, and circles have been used for virtual nodes. Groups $\Gc_{v_1},\dots, \Gc_{v_4}$ correspond to receiver 1's demand,  groups $\Gc_{v_5}$ and $\Gc_{v_4}$, correspond to receiver 2, and groups $\Gc_{v_7}$ and $\Gc_{v_8}$, correspond to receiver 3. Vertices  $v_{1},  v_{2}$, $ v_{3}$, and $ v_{4}$  are connected to all other vertices, and all corresponding edges have been removed in the depicted graph. Each group is colored with the same color as the colored vertex shown in that group.  
	}	\label{fig:examples}
\end{figure*}

\subsubsection*{II) Linear Index Coding via Group Coloring}\label{subsec:index code}
From the augmented conflict graph $\mathcal H_{\Cbf,\Qbf}$, the multicast encoder computes a linear index code, whose construction is described next and relies on the  following two definitions. 

\begin{definition} ({\bf Graph Coloring})\label{def:color}
	A valid coloring of a graph is an assignment of colors to the vertices of the graph such that no two adjacent vertices are assigned the same color.
\end{definition}

\begin{definition} \textbf{({\bf Group Coloring})}
	Given a valid coloring of the augmented conflict graph $\mathcal H_{\Cbf,\Qbf}$, a valid   group coloring  of  $\mathcal H_{\Cbf,\Qbf}$  consists of
	assigning to each group $\Gc_{v_r},\; \forall v_r\in  \Vc_r$, one of the colors assigned to the vertices inside
	that group.
\end{definition}	

The multicast encoder proceeds as follows:
\begin{itemize}
	
	\item[1)] It finds a valid coloring of  graph $\mathcal H_{\Cbf,\Qbf}$.
	
	\item[2)] For the given graph coloring, it finds a valid group coloring  of  $\mathcal H_{\Cbf,\Qbf}$.
	
	\item[3)] For the given group coloring, from each group, it extracts the (root or virtual) vertex that has the same color as that assigned to the group. Since in the augmented conflict graph all nodes within a group are adjacent, a distinct color is assigned to each of them, and hence, only one vertex is extracted from each group. The packets corresponding to the extracted vertices are used as references for delivering the requested packets (corresponding to the root nodes). If a virtual node is selected from a group, a refinement will be required in step 5.
	
	\item[4)] For each color in the group coloring of $\mathcal H_{\Cbf,\Qbf}$, the multicast encoder XORs the packets corresponding to the vertices extracted from the groups with that color. Note that while the packet associated to a root node $v_r$, which is requested by receiver $\mu(v_r)$, is not locally available in its cache, the receiver may have cached the  packet associated to a virtual node $\tilde v \in \Gc_{v_r}$. Then, if the extracted node is a virtual node that is already cached at receiver $\mu(v_r)$, there is no need for it to be delivered, i.e., to be XORed with the other same-color packets. The concatenation of the XORed packets, denoted by $Y^{CM}_{\dbf}$, is referred to as the {\em coded} segment.
	
	\item[5)]  	For each receiver, the multicast encoder, if needed, computes a refinement codeword. The length of this refinement	is equal to the conditional entropy of the requested file given the decodable information from the coded segment and the cached information, and it is at most $\delta H(W) F$ bits. 
	The concatenation of the (uncoded) refinement codewords needed by all receivers, denoted by $Y_{\dbf}^{UM}$, is referred to as the {\em refinement} segment and is of length 
	$$L(Y_{\dbf}^{UM}) = \sum_{k=1}^K H(\File_{d_k}^F| Y_{\dbf}^{CM},\Cbf_k).$$
	
	\item[6)] The coded multicast codeword results from	concatenating the coded segment $Y^{CM}_{\dbf}$ and the refinement segment  $Y^{UM}_{\dbf}$. The overall transmission rate corresponding to the graph coloring and group coloring being considered for demand $\dbf$ is $ L(Y_{\dbf}^{CM})/F +  L(Y_{\dbf}^{UM}) /F$.
	
	\item[7)] Among all valid graph colorings, and among all corresponding valid group colorings, the multicast encoder selects the coded multicast codeword  resulting in the minimum overall transmission rate, and multicasts it over the shared link.
	
\end{itemize}

At each receiver, the multicast decoder uses the cached information and the received coded segment $Y_{\dbf}^{CM}$ to reconstruct a (possibly) distorted version of the receiver's requested packets, due to the potential reception of packets that are $\delta$-correlated with the requested ones. Then, if needed, the multicast decoder uses the refinement segment $Y_{\dbf}^{UM}$ to losslessly reconstruct its demand.

From  Remark \ref{remark2}, it follows that for each given graph coloring, there is always a  valid group coloring, whose group colors are the same as their root node. Transmission based on this group coloring is equivalent to that resulting from the coloring of the conventional index coding conflict graph. That is, among all coded multicast codewords constructed by following steps 1-6, there is also the correlation-unaware codeword corresponding to the conventional conflict graph.  Hence, the correlation-aware multicast encoder achieves lower rate than that achieved via  coloring of the conventional conflict graph. Clearly, constructing multicast codewords based on not only the requested packets, but also their correlated packets, i.e., adding virtual nodes by augmenting the conflict graph, increases available coding opportunities. In other words, computing the index code based on the augmented conflict graph allows the sender to multicast coded messages that are simultaneously useful to a larger number of receivers  compared to the conventional conflict graph.

\subsection{Greedy Group Coloring (GGC)}\label{subsec: algorithms}
In Sec.~\ref{subsec:index code}, we described how to design a coded multicast codeword that is useful to a large number of receivers based on a group coloring of the augmented conflict graph. The multicast encoder determines the optimal multicast codeword across all valid graph colorings and their corresponding group colorings. Given that graph coloring, and by extension group coloring, is NP-Hard,  in this section we propose {\em Greedy Group Coloring (GGC)}, a polynomial-time  approximation to the group coloring problem. GGC extends the existing Greedy Constraint Coloring (GCC) scheme \cite{ji2017order} to account for file correlation in cache-aided networks. 
The proposed GGC algorithm computes the coded multicast codeword by choosing between two coloring schemes, referred to as GGC$_1$ and GGC$_2$, and described in Algorithms \ref{algorithm1} and \ref{algorithm2}, respectively. Given the two valid group colorings resulting from GGC$_1$ and GGC$_2$, the coded multicast codewords of each Algorithm are computed by concatenating the corresponding coded and refinement segments. GGC chooses the scheme resulting in the lower overall rate (i.e., shorter multicast codeword). 
It is worthwhile noticing that GGC$_2$ is nothing other than the correlation-aware version of naive (uncoded) multicasting,\footnote{Naive multicasting refers to the transmission of a common, not-network-coded stream of data packets, simultaneously received and decoded by multiple receivers.} which we have included here, for the sake of completeness, in a form symmetric to that of GGC$_1$. 


\begin{algorithm}[ht]
	\caption{ GGC$_1$ - Greedy Group Coloring 1} \label{algorithm: GClC 1} \small{
		\begin{algorithmic}[1]
			\WHILE {$\Vc_r \neq \emptyset$}
			\STATE Pick any root node $v_r \in \Vc_r$
			
			\STATE Sort $\Gc_{v_r}$ in decreasing order of receiver label size\footnotemark, such that for $v_t,v_{t+1} \in  \Gc_{v_r}$, $|\{\mu(v_t),\eta(v_t)\}| \geq |\{\mu(v_{t+1}),\eta(v_{t+1})\}|$ where $v_t$ denotes the $t^{th}$ vertex in the ordered sequence.
			\STATE $t = 1$
			\WHILE {$t \leq | \Gc_{v_r}|$}
			\STATE Take $ v_t \in \Gc_{v_r}$; Let $I_{v_t} = \{ v_t\}$
			\FORALL{$ v \in \Vc\backslash \Big\{ \Gc_{v_r} \cup I_{v_t}  \Big\}$}
			\IF{ $\Big\{$ There is no edge between $v$ and $I_{v_t}$ $\Big\}$ $\bigwedge$
				$\Big\{ \{\mu(v),\eta(v)\} = \{\mu(v_t),\eta(v_t)\} \Big\} $ }
			\STATE $I_{v_t} = I_{v_t} \cup \{v\}$
			\ENDIF
			\ENDFOR
			\STATE ${v_t}^{*} = \argmax\limits_{\{ v_\tau:\, \tau=1, \ldots,t \}} |I_{v_\tau}|$
			\IF{ $|I_{v_t^{*}}| \geq |\{\mu(v_{t}),\eta(v_{t})\}|$  or $t=| \Gc_{v_r}|$ }
			\STATE $\Ic = I_{v_t^{*}} $
			\STATE $t=| \Gc_{v_r}|+1$
			\ELSE
			\STATE $t = t+1$
			\ENDIF
			\ENDWHILE
			\STATE Color all vertices in $\Ic$ with an unused color.
			\STATE {$\Jc = \{ r(v) : v \in \Ic\}$}
			\STATE { $  \Jc_r = \Big \{v_r \!\in \!\Vc_r :  \exists  v \! \in \!\Ic, \, \! \Big\{ \mu(v_r)=\mu(v) \Big\} \! \bigwedge  \!  \Big\{\rho(v) \in \Omega_{\rho(v_r)}  \Big\}     \Big \}$}
			
			\STATE $\Vc_r  \leftarrow \Vc_r\backslash  \Jc \cup \Jc_r$, $\Vc \leftarrow \Vc\backslash \bigcup\limits_{v_r \in \Jc\cup \Jc_r} \Gc_{v_r}$
			\ENDWHILE
			
		\end{algorithmic}
	}
	\label{algorithm1}
\end{algorithm}
\footnotetext{Packets with equal label size are ordered such that $H({\rho(v_r)}|{\rho(v_t)})\leq H({\rho(v_r)}|{\rho(v_{t+1})})$.}  

\begin{algorithm}[ht]
	\caption{  GGC$_2$ - Greedy Group Coloring 2} \label{algorithm: GClC 2} \small{
		\begin{algorithmic}[1]
			\WHILE {$ \Vc_r \neq \emptyset$} \STATE Pick any root node $v_r \in \Vc_r$
			\FORALL{$v \in  \Gc_{v_r}$ }
			\STATE $I_{v} = \{v\} \cup \{v' \in \Vc\backslash \Gc_{v_r} : \rho(v')=  \rho(v)\}$
			\ENDFOR
			\STATE $v^{*} = \argmax\limits_{v} |I_{v}|$
			\STATE $\Ic =  I_{v^*}$
			\STATE Color all vertices in $\Ic$ with an unused color.
			\STATE {$\Jc = \{ r(v) :  v \in \Ic\}$ }
			\STATE $\Vc_r \leftarrow \Vc_r  \backslash \Jc$, $\Vc \leftarrow \Vc\backslash  \bigcup\limits_{v_r\in \Jc} \Gc_{v_r}$
			\ENDWHILE
			
		\end{algorithmic}
	}
	\label{algorithm2}
\end{algorithm}


As mentioned previously, any vertex (root node or virtual node) $v\in\mathcal V$ is uniquely identified by the triplet $\Big(\rho(v),\mu(v), r(v)\Big)$. In GGC, for any $v\in \Vc$, we further define $ \eta(v)\triangleq\{k: \rho(v) \in \Cbf_k\}$ as the set of receivers who have cached packet $\rho(v)$.   We refer to the unordered set of receivers $\{\mu(v), \eta(v)\}$ as the {\em receiver label} of vertex $v$, which corresponds to the set of receivers either requesting or caching packet $\rho(v)$.

\begin{definition}
	{An independent set is a set of vertices in a graph, no two of which are adjacent.}
\end{definition}
Algorithm GGC$_1$ starts from a root node $v_r\in \Vc_r$ among those not yet selected, and searches for the node $v_t \in \Gc_{v_r}$ which forms the largest independent set $\Ic$  with all the vertices in $ \Vc$ having its same receiver label. 
Next, vertices in set $\Ic$ are assigned the same color (see lines 20-23).  Algorithm GGC$_2$ is based on a correlation-aware extension of GCC$_2$ in \cite{ji2017order},  and corresponds to a generalized uncoded (naive) multicast. For each root node  $v_r\in \Vc_r$, whose group has not yet been colored, only the vertex  $ v_t\in \Gc_{v_r}$ whom is found among the nodes of more groups, i.e., correlated with a larger number of requested packets, is colored, and its color is assigned to $\Gc_{v_r}$ and to all groups containing $v_t$. For both GGC$_1$ and GGC$_2$,  when the graph coloring algorithm terminates, only a subset of the graph vertices, $\Vc$, are colored, such that only one vertex from each group in the graph is colored. This is equivalent to identifying a valid group coloring where each group is assigned the color of its colored vertex. For both GGC$_1$ and GGC$_2$, the packets corresponding to the same-color vertices are XORed together and then concatenated to make the coded segment of the multicast codeword.

\begin{example}\label{ex:coloring 2}	 
	For the augmented conflict graph given in Example~\ref{ex:coloring}, a valid group coloring based on Algorithm GGC$_1$ is shown in Fig.~\ref{fig:examples} (b). Based on the greedy group coloring, the groups are colored with $4$ distinct colors, such that one color is assigned to virtual nodes $\tilde v_1$ and $\tilde v_2$, one color is assigned to virtual node $\tilde v_3$ and root node $v_5$ , one color is assigned to virtual node $\tilde v_4$ and root node $v_6$, and finally one color is assigned to virtual nodes $\tilde v_7$ and $\tilde v_8$.  Since the packets corresponding to $\tilde v_1$ and $\tilde v_2$, and the packets of $\tilde v_7$ and $\tilde v_8$  are cached at receivers 1 and  3, respectively, they are not XORed together, and not included in the multicast codeword. Based on this correlation-aware group coloring, the sender transmits a coded segment $Y^{CM}_{\dbf}  = \{W_{1,3}\oplus W_{3,1},W_{1,4} \oplus W_{3,2}\}$ 	followed by a refinement segment $Y_{\dbf}^{UM}$, with rate
	\begin{align}
	\frac{1}{F}L(Y^{CM}_{\dbf})& = H({ \File}_{1,1}|{
		W}_{1,1})+H(\File_{1,2}|{W}_{1,2})+H(\File_{1,3}|{W}_{1,3})+H({\File}_{1,4}|{ W}_{1,4})\notag\\
	&\quad+H({
		W}_{5,2}|{ W}_{6,2})+H({
		W}_{5,4}|{ W}_{6,4}) ,\notag\\
	& =  H(\File_1|W_1)  + \frac{1}{2} H(W_5|W_6) \leq \frac{3}{2}\delta. \notag
	\end{align}
	The overall load is at most $\frac{1}{2}+\frac{3}{2}\delta$. The correlation-unaware CACM scheme in \cite{ji2017order} constructs the conventional index coding conflict graph, which only contains the root nodes, and multicasts $\{\File_{1,1},\File_{1,2},\File_{1,3},\File_{1,4},$  $W_{3,1}\oplus W_{5,4},W_{3,2},W_{5,2} \}$, with rate $\frac{7}{4}$.
\end{example}

\section{ Performance of CA-CACM  with Random  Cache Placement} \label{Sec: upper bound}
{This section provides an upper bound on the rate achieved by the proposed CA-CACM scheme, denoted by $\Rach(M)$, under the assumption of random fractional caching. By considering the correlation-aware random fractional cache encoder described in Sec. \ref{subsec:CA RAP}, we are able to evaluate $\Rach(M)$ for a system with arbitrary number of receivers and files, as $F,B\rightarrow\infty$, which also upper bounds $R^*(M)$.}

\subsection{Static Library} \label{subsec:rate static}
In this subsection, we provide an upper bound on the rate achieved with the proposed CA-CACM scheme, and hence, on the optimal rate-memory function, for a static setting as described in Sec.~\ref{sec:Problem Formulation}. In order to quantify the performance of the proposed scheme, we consider  a symmetric library distribution, such that,  for a given $ \delta$ and for any packet, the size of its $\delta$-ensemble, i.e.,  the set of packets in the library that are $\delta$-correlated with it, has cardinality equal to $G_\delta$.  Note that due to the symmetry across library files, the optimal caching distribution is uniform across files, and ${\varrho_n} = \frac{1}{N}$ for $n\in\{1,\dots,N\}$.

\begin{theorem}\label{thm:uniform rate}
	Consider a  static broadcast caching network with $K$ receivers, library size $N$, cache capacity $M$, and uniform demand distribution. Then, the rate-memory function, $R^*(M)$, is upper bounded as
	\begin{align}
	R^*(M) \leq   \limsup_{F\rightarrow\infty} {  \Rach(M)}\leq  
	\inf_{\delta}\,\min \Big\{\Psi_1^S(K, N,{ M,\delta}) , \, \Psi_2^S( K, N,{  \delta}) \Big\} , 
	\label{R_GGC}
	\end{align}
	where
	\begin{align}
	&\Psi_1^S(K, N,M,\delta)=  \sum\limits_{\ell =1 }^{K} \binom{K}{\ell}    \Big(1-\frac{M}{N} \Big)  \Big(  \frac{M}{N} \Big)^{G_\delta-1}  
	\bigg[
	P_{\ell}       + 
	\sum\limits_{g =1}^{G_\delta-1} \binom{G_\delta-1}{g}  \Big(\frac{N}{M}-1 \Big)^{g}  \phi(\ell,g)  \bigg], \\
	& \Psi_2^S( K, N,\delta) =   (1-\delta)\, \Phi\Big( K,      \frac{N}{G_\delta}   \Big)     + \delta    \,\Phi( K, N)          , \label{eq:Nbar}
	\end{align}
	with
	\begin{align}
	& \phi(\ell,g) =  (\widehat P_{\ell}) ^{g+1}   \bigg( \alpha(\ell, g+1) \;+\;\sum\limits_{t=1}^{g} \;\alpha(\ell,t)  \; \Big( \frac{P_{\ell}}{\widehat P_{\ell}} \Big)^t  
	\bigg[   \binom{g+1}{t} +   \delta \, \xi(\ell)   \, \binom{g}{t}  \bigg]  \bigg)  \label{eq:phi} ,\\
	& \alpha(\ell,t)  =  \hspace{-3mm}\sum\limits_{d=1}^{\min\{t,\,x(\ell)\}}
	\frac{1}{d}  \binom{x(\ell)-1}{d-1} \bigg( \sum\limits_{  t_1+  \dots+ t_d = t  }    \frac{ t! }{t_1! \,\dots\, t_d!  }  \bigg) ,    \label{eq:alpha}\\
	&\Phi(\kappa, \nu) = \nu   \, \Big(   1- \Big(1-\frac{1}{\nu}\Big)^{\kappa}     \Big), \label{naivemult}\\
	&  P_{\ell} =   \Big(1-\frac{M}{N}\Big)^{(K-\ell) }\Big(\frac{M}{N}\Big)^{(\ell-1) }, \\
	&  \widehat P_{\ell} =   \sum_{i=1}^{\ell-1} \binom{K-1}{i-1}   P_{i}  ,\;\; \ell =2,\dots ,K,\; \quad   \widehat P_{1} = 0,\\
	& \xi (\ell) = \sum\limits_{i=1}^{\ell}\,i\, \binom{\ell}{i}  ( \widehat P_{\ell}    )^i   (P_{\ell})^{\ell-i}  ,\\
	&  x(\ell) = \binom{K-1}{\ell-1}.
	\end{align}
\end{theorem}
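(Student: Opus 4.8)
The plan is to realize the right-hand side of \eqref{R_GGC} as the minimum of the rates of the two colorings that GGC arbitrates between, and to estimate each of those rates by a law-of-large-numbers analysis over packets. Since the CA-CACM scheme of Sec.~\ref{sec:ach scheme} equipped with the random fractional cache encoder of Sec.~\ref{subsec:CA RAP} is a valid caching scheme for memory $M$ whose worst-case error probability vanishes as $F\to\infty$ --- the coded segment is decodable thanks to the edge structure of $\mathcal{H}_{\Cbf,\Qbf}$ and the refinement segment makes the reconstruction lossless as $F\to\infty$ --- Definition~\ref{def:achievable-rate} already gives the first inequality $R^*(M)\le\limsup_{F\to\infty}\Rach(M)$, so only the second inequality in \eqref{R_GGC} needs work. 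Because the multicast encoder keeps, for every realization of caches and demands, the shorter of the codewords produced by GGC$_1$ (Algorithm~\ref{algorithm1}) and GGC$_2$ (Algorithm~\ref{algorithm2}), it suffices to upper bound $\limsup_F\EE[L(Y_\dbf)]/F$ separately under each algorithm, by $\Psi_1^S$ and $\Psi_2^S$ respectively; taking the minimum and then the infimum over the free threshold $\delta$ then yields the claim. I will work throughout in the regime $F,B\to\infty$: the law of large numbers over the $B$ packets of each file and over the i.i.d.\ source symbols inside a packet lets the empirical cache pattern of each packet, the $\delta$-ensemble cardinalities, and the lengths of all refinement codewords concentrate on their expectations and on the corresponding conditional entropies, so random counts may be replaced by expectations up to a vanishing correction.

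For the $\Psi_2^S$ term I analyze the correlation-aware naive multicast GGC$_2$. Under i.i.d.\ uniform demands the expected number of \emph{distinct} requested files equals $\Phi(K,N)$ with $\Phi(\kappa,\nu)=\nu(1-(1-1/\nu)^\kappa)$, and, grouping the library into its $N/G_\delta$ classes of mutually $\delta$-correlated files (available from the symmetry hypothesis), the expected number of classes hit by at least one request equals $\Phi(K,N/G_\delta)$. For each active class GGC$_2$ can transmit one of its requested files in full as a reference, costing $H(W)$, and then, for every other distinct requested file, a refinement of length at most $\delta H(W)$ conditioned on that class reference --- valid by Definition~\ref{def:delta-cor vec}. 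Summing these expectations and normalizing $H(W)=1$ gives a rate of at most $\Phi(K,N/G_\delta)+\delta\bigl(\Phi(K,N)-\Phi(K,N/G_\delta)\bigr)=(1-\delta)\,\Phi(K,N/G_\delta)+\delta\,\Phi(K,N)=\Psi_2^S$.

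The core of the argument is the $\Psi_1^S$ bound for GGC$_1$ on the random augmented conflict graph $\mathcal H_{\Cbf,\Qbf}$. Under random fractional caching each packet of each file lies in each of the $K$ caches independently with probability $M/N$; thus a requested packet of $W_{d_k}$ is absent from receiver $k$'s cache with probability $1-M/N$ and, conditioned on that, is cached by a prescribed set of $\ell-1$ of the other $K-1$ receivers with probability $P_\ell=(1-M/N)^{K-\ell}(M/N)^{\ell-1}$. I will classify the color classes that GGC$_1$ forms by a \emph{level} $\ell\in\{1,\dots,K\}$: a level-$\ell$ class is a monochromatic independent set whose vertices share a receiver label of size $\ell$, is therefore useful to $\ell$ receivers, and contributes exactly one transmitted packet. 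Counting colors as in the correlation-unaware analysis of~\cite{ji2017order} --- pay $(1-M/N)$ per requested file for its non-cached portion, spread the requested packets over the $x(\ell)=\binom{K-1}{\ell-1}$ admissible receiver-label patterns at each level $\ell$, merge within cliques of size $\ell$, and recombine the factor $1/\ell$ with $\binom{K-1}{\ell-1}$ into $\binom{K}{\ell}$ --- reproduces, in the $G_\delta=1$ special case, the decentralized rate $\tfrac{N-M}{M}\bigl(1-(1-M/N)^K\bigr)=(1-M/N)\sum_\ell\binom{K}{\ell}P_\ell=\Psi_1^S\big|_{G_\delta=1}$, which is a sanity check. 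For $G_\delta>1$ the new ingredient is that each group also offers its $G_\delta-1$ correlated (virtual) packets as alternative references: a $\mathrm{Binomial}(G_\delta-1,1-M/N)$ number of them --- call it $g$ --- are not cached by the requester, hence can replace the root in the codeword at the price of a refinement of length $\le\delta H(W)$, while each such packet is itself cached by a random subset of the remaining $K-1$ receivers. Working out, for the greedy merge of Algorithm~\ref{algorithm1}, the best level attainable by combining the root with these $g$ candidate references yields the inner sum over $g$ with weight $\binom{G_\delta-1}{g}$; absorbing the prefactor $(M/N)^{G_\delta-1}$ and using $N/M-1=\tfrac{1-M/N}{M/N}$ turns the per-$g$ weight into $\binom{G_\delta-1}{g}(1-M/N)^g(M/N)^{G_\delta-1-g}$, the exact binomial for the $g$ non-cached correlated packets, and the per-$(\ell,g)$ contribution is $\phi(\ell,g)$. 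Inside $\phi$, the quantity $\widehat P_\ell=\sum_{i=1}^{\ell-1}\binom{K-1}{i-1}P_i$ is the probability that a correlated packet sits at a level strictly below $\ell$ (so it cannot push a clique past level $\ell$), $\xi(\ell)$ is an expected-multiplicity factor for such useful caching events among the $\ell$ receivers of a level-$\ell$ clique, and $\alpha(\ell,t)$ --- a weighted sum over ordered set partitions $t_1+\dots+t_d=t$, i.e.\ over surjections onto $d$ of the $x(\ell)$ label patterns --- is the standard occupancy count for the number of distinct colors realizable by level-$\ell$ cliques given $t$ coding opportunities; the $\delta$-weighted terms of $\phi$ are precisely the refinement segment, bounded by $\delta H(W)$ for every group from which a virtual node is extracted. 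Letting $F,B\to\infty$ then converts these expectations into the stated closed form.

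The step I expect to be the main obstacle is this last analysis: inside GGC$_1$'s greedy rule --- order a group by receiver-label size, then grow the largest monochromatic independent set with the same label --- one must track the competition between a root node and its $G_\delta-1$ virtual siblings, and across groups, so that the expected number of colors factorizes into the $\phi,\widehat P_\ell,\xi,\alpha$ form without double-counting either coding opportunities or refinements; this requires a coupling/domination argument showing that the greedy coloring never does worse than the stochastic count, which is the $G_\delta>1$ generalization of the argument of~\cite{ji2017order} (itself the $G_\delta=1$ instance). A secondary, more routine obstacle is the concentration step that passes from the random per-realization rate to its expectation, and replaces refinement lengths by conditional entropies, as $F,B\to\infty$.
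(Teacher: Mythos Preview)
Your proposal is correct and follows essentially the same route as the paper: the $\Psi_2^S$ bound is obtained exactly as you describe (the paper first upper bounds by the $M=0$ case, then counts distinct requested files and distinct hit $\delta$-ensembles to arrive at $(1-\delta)\Phi(K,N/G_\delta)+\delta\Phi(K,N)$), and the $\Psi_1^S$ bound is obtained by enumerating receiver labels $\Kc_\ell$ of each size $\ell$, computing the probability that a group $\Gc_{v_r}$ is associated to a given $\Kc_\ell$ via the binomial over the $g$ non-cached virtual packets and the combinatorics you identify for $\alpha(\ell,t)$, $\widehat P_\ell$, and $\xi(\ell)$, and then invoking concentration as $B\to\infty$ exactly as in \cite{ji2017order}. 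The ``main obstacle'' you flag is precisely what the paper isolates as a separate lemma (its Lemma~1), which computes the Bernoulli parameter $\lambda(\ell,G_\delta)$ of the indicator $\mathbbm{1}\{\Kc_\ell\text{ is associated to }\Gc_{v_r}\}$; the coded and refinement segments are analyzed separately and then recombined into $\phi(\ell,g)$, with the $\delta\,\xi(\ell)$ terms carrying the refinement cost just as you say.
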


\begin{proof}
	The proof is given in Appendix \ref{app:uniform rate}.
\end{proof}

Note that the rate  in Theorem~\ref{thm:uniform rate} is obtained by first deriving an upper bound, i.e., 
$$\min\{\Psi_1^S(K, N,M,\delta) , \; \Psi_2^S(K, N,\delta) \},$$
for a given  $\delta$, and then minimizing it with respect to $\delta$. 

As stated in Remark~\ref{remark4}, if each group is only composed of its root node, then the augmented conflict graph is equivalent to the conventional index coding conflict graph. Therefore, in the special case of selecting $\delta$ such that $G_\delta=1$,   $\min\{\Psi_1^S(K, N,M,\delta) , \; \Psi_2^S(K, N,\delta) \}$ provides an
upper bound on the rate achieved with the correlation-unaware scheme proposed in \cite{ji2017order}.

\subsection{ Dynamic Library} 
This subsection provides an upper bound on the rate achieved with the proposed CA-CACM scheme for a dynamic setting as described in Sec.~\ref{sec:Problem Formulation}, when $\pi_n=\pi$ for any $n\in\{1,\dots,N\}$.

\begin{theorem}\label{thm:uniform rate dynamic}
	Consider a dynamic broadcast caching network with $K$ receivers, library size $N$, cache capacity $M$, and uniform demand distribution. For a given $\delta$ and update probability $\pi$, the rate-memory function, $R^*(M)$, is upper bounded as  
	\begin{align}
	R^*(M)    \leq& \limsup_{F\rightarrow\infty} {  \Rach(M)}\leq     \min \Big\{\Psi_1^D(K,N,M,\delta,\pi) , \Phi( K,   N  )  \Big\} , 
	\end{align}
	where 	 
	\begin{align}
	\Psi^D_1(K,N,M,\delta,\pi)  = \sum\limits_{\ell =1 }^{K} \binom{K}{\ell}  \Big(1- \frac{M}{N} \Big)P_{\ell}    +  \delta \,\Phi(  K_\pi,N_\pi ) ,
	\end{align}	
	with $\Phi(\kappa,\nu)$ and $P_{\ell}$ as defined in Theorem \ref{thm:uniform rate},  $K_\pi\triangleq\pi  K$ and $N_\pi\triangleq\pi N $. 
	
\end{theorem}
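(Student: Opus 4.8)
The plan is to upper-bound $\limsup_{F\to\infty}\Rach(M)$ by exhibiting, for every demand realization $\dbf$ and every library realization, two \emph{explicit} valid group colorings of the augmented conflict graph $\mathcal H_{\Cbf,\Qbf}$; since Step~7 of the CA-CACM multicast encoder selects, for each $\dbf$, the group coloring of smallest overall rate, averaging over the demands and the library files and letting $F,B\to\infty$ will give $\limsup_{F\to\infty}\Rach(M)\le\min\{\Psi_1^D(K,N,M,\delta,\pi),\Phi(K,N)\}$. Throughout we normalize $H(W)=1$, as in the statement. The outer inequality $R^*(M)\le\limsup_{F\to\infty}\Rach(M)$ is immediate from Definition~\ref{def:achievable-rate}, since random fractional caching together with the group-coloring encoder forms a valid sequence of caching schemes with $P_e^{(F)}\to0$.

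\textbf{First coloring (the $\Psi_1^D$ bound).} In the dynamic setting $p({\bf w})=\prod_n p(w_n)$ and $p({\bf v}|{\bf w})=\prod_n p(v_n|w_n)$, so $\{W_n\}$ and $\{V_n\}$ both have independent components and, by Remark~\ref{remark1}, the only pair of packets that can ever be $\delta$-correlated is $W_{n,b}$ with $V_{n,b}$ (same file, same position), which holds exactly when $H(V_n|W_n)\le\delta$; we take this to hold, as is implicit in the statement. Hence for a requested packet $\File_{n,b}\in\Qbf$ belonging to an \emph{updated} file ($\File_n^F=V_n^F$) the $\delta$-ensemble is $\Omega_{\File_{n,b}}=\{\File_{n,b},W_{n,b}\}$, while for a non-updated file it is the singleton $\{W_{n,b}\}$. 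We extract from each group the vertex carrying the packet $W_{n,b}$ --- the root node when the file is not updated, the virtual node when it is --- and give every non-extracted root node $V_{n,b}$ a fresh private color; this extends to a valid coloring of $\mathcal H_{\Cbf,\Qbf}$, hence defines a valid group coloring, because the extracted vertices all carry $W$-type packets and their induced conflicts coincide with the conventional index-coding conflict graph (Remark~\ref{remark2}) of the \emph{virtual demand} in which receiver $k$ requests the cached version $W_{d_k}^F$. The coded segment $Y^{CM}_{\dbf}$ produced by this coloring is therefore the index code of a decentralized coded-caching problem with an independent $N$-file library, per-file cached fraction $M/N$, and i.i.d. uniform demands, whose expected normalized length is at most $\sum_{\ell=1}^K\binom K\ell(1-M/N)P_\ell=\tfrac{1-M/N}{M/N}\bigl(1-(1-M/N)^K\bigr)$ --- the decentralized coded-caching rate of \cite{maddah14decentralized,ji2017order} --- as $F,B\to\infty$.

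\textbf{Refinements and the second coloring.} A receiver $k$ requesting an updated file $n=d_k$ recovers from its cache and $Y^{CM}_{\dbf}$ all packets $W_{n,b}$, $b=1,\dots,B$, hence the whole reference $W_n^F$; a single refinement codeword of length $H(V_n^F|W_n^F)\le\delta F$ bits, produced by random binning with side information $W_n^F$ (lossless as $F\to\infty$), then lets \emph{every} receiver requesting $n$ reconstruct $V_n^F$, so one refinement suffices per distinct updated-and-requested file. Since a refinement for file $n$ is needed iff $n$ is updated (probability $\pi$) and requested by at least one receiver (probability $1-(1-1/N)^K$), independently, the expected number of refinements is $\pi N\bigl(1-(1-1/N)^K\bigr)=\pi\,\Phi(K,N)\le\Phi(\pi K,\pi N)=\Phi(K_\pi,N_\pi)$; adding this to the coded-segment contribution gives expected normalized rate at most $\Psi_1^D(K,N,M,\delta,\pi)$. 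For the second coloring we use the (correlation-aware) naive-multicast group coloring, which transmits the distinct requested current versions $\File_{d_k}^F$ uncoded, of expected normalized length $\Phi(K,N)$. Taking the per-demand minimum of the two rates, averaging over $\dbf$ and the library, and letting $F,B\to\infty$ yields the theorem.

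\textbf{Main obstacle.} The bulk of the work is the asymptotic bookkeeping as $F,B\to\infty$: verifying that random fractional caching concentrates so each cache holds exactly a fraction $M/N$ of every file and the \cite{ji2017order}-type enumeration of the coded-segment colors holds with the stated $\binom K\ell P_\ell$ weights; that the per-file refinements can be driven to rate $H(V_n|W_n)$ with error probability vanishing uniformly over $\dbf$ and library realizations so that $P_e^{(F)}\to0$; and the elementary inequality $\pi\,\Phi(K,N)\le\Phi(K_\pi,N_\pi)$. Checking that the extracted-vertex assignment is a bona fide valid group coloring --- that it extends to a full graph coloring and that each group's color is among its vertices' colors --- is routine.
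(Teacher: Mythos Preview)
Your proposal is correct and follows essentially the same two-step decomposition as the paper's proof in Appendix~\ref{App:uniform rate dynamic}: reduce the coded segment to the standard decentralized coded-caching rate by always extracting the original-library packet $W_{n,b}$ from each group, bound the refinement segment by the number of distinct updated-and-requested files, and handle the second term via naive multicast.

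The one noteworthy difference is in how you reach the refinement bound $\delta\,\Phi(K_\pi,N_\pi)$. The paper conditions on the random counts $\widetilde N$ (number of updated files) and $\widetilde K$ (number of receivers requesting updated files), writes the expected refinement count as $\EE\big[\widetilde N\big(1-(1-1/\widetilde N)^{\widetilde K}\big)\big]$, and then applies the tower property and Jensen's inequality twice to arrive at $\Phi(K_\pi,N_\pi)$. You instead compute the expectation directly via linearity --- each file is independently updated (probability~$\pi$) and requested (probability $1-(1-1/N)^K$) --- obtaining the exact value $\pi\,\Phi(K,N)$, and then invoke the elementary inequality $\pi\,\Phi(K,N)\le\Phi(\pi K,\pi N)$. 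Your route is shorter and yields a strictly tighter intermediate bound; it also sidesteps the need to verify the concavity properties of $\Phi(\kappa,\nu)$ in each argument that Jensen requires. Both arguments land on the same stated bound, so the difference is purely one of execution.
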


\begin{proof}
	The proof is given in Appendix~\ref{App:uniform rate dynamic}.
\end{proof}

\section{ Performance of CA-CACM with Deterministic Cache Placement} \label{Sec: 2user2file}
This section quantifies the rate achieved by the proposed CA-CACM scheme, denoted by  $\Rach(M)$, under the assumption of deterministic  fractional caching. {Specifically, we focus on  a simple  static broadcast caching network with two receivers and two correlated files, for which we are able to deterministically design the close-to-optimal cache configuration.  For this network, a  Gray-Wyner based correlation-aware CACM scheme was proposed in \cite{ISITjournal} and proven to be optimal over a region of the memory.}

We assume that the library is composed of two uniformly popular files $\{W_1^F,\,W_2^F\}$ generated by a 2-DMS, such that  $H(\Wsf_1|\Wsf_2)= H(\Wsf_2|\Wsf_1)=\delta H(W) $. 
Each file is divided into $B=2$ equal length packets with $\frac{1}{2}H(W)F$ bits, and based on the above assumption the packets from file 1 and 2 are $\delta$-correlated as described in Remark \ref{remark1}.  As  worst-case and best-case demands, we consider demands $\dbf_1 = (1,2)$ and $\dbf_2=(1,1)$, respectively. The proposed CA-CACM scheme operates as follows:
\begin{itemize}
	\item $M = 0$:  In this case, the receivers have not stored any of the files. Then,  based on the group coloring approach in Sec \ref{subsec:CA Delivery},
	
	\begin{itemize}
		\item For demand $\dbf_1$,  the sender delivers  one of the files, for example $W_1^F$, and sends a refinement with rate $H(\Wsf_2|\Wsf_1)$ so that the second receiver can recover file $W_2^F$.   
		\item For demand $\dbf_2$,  the sender multicasts file $W_1^F$.
	\end{itemize}	
	Hence, the proposed CA-CACM scheme simply results in the correlation-aware version of  naive multicasting, which leads to  an average load of
	
	\noindent\resizebox{0.95\linewidth}{!}{
		\begin{minipage}{\linewidth}
			\begin{align} 
			\Rach(M) &=  \frac{1}{4}\Big(H(\Wsf_1) +{ H(\Wsf_2)} +2H(\Wsf_1,\Wsf_2)\Big)   = \Big(1 +\frac{\delta}{2}\Big)H(W).\notag
			\end{align}
		\end{minipage}
	}

	\item $M = H(W)$:  The caches are filled as
	\begin{align}
	&Z_{1} = \{W_{1,1},\; W_{2,2}\},\;\; Z_{2} = \{ W_{1,2},\; W_{2,1} \}. \label{cache1}
	\end{align}
	\begin{itemize}
		\item	 {For the worst-case demand $\dbf_1$, the CA-CACM scheme constructs the augmented conflict graph based on the demand and cache contents. The graph consists of two groups $\mathcal G_{v_1}$ and $\mathcal G_{v_2}$, with roots $v_1$ and $v_2$, respectively, such that
			\begin{align}
			&\mathcal G_{v_1} =\{v_1,\tilde v_1\},\; v_1: \Big(\File_{1,2},1,v_1\Big),\; \tilde v_1: \Big(W_{2,2},1,v_1\Big) ,\notag\\
			&\mathcal G_{v_2} =\{v_2,\tilde v_2\},\; v_2: \Big(\File_{2,2},2,v_2\Big),\; \tilde v_2: \Big(W_{1,2},2,v_2\Big) ,\notag
			\end{align}
			where, due to the static setting considered, $\File_{1,2} = W_{1,2}$ and $\File_{2,2} = W_{2,2}$. 
			Table~\ref{tb:graph2}  displays four valid  group colorings of the augmented conflict graph. 
			As explained in Sec.~\ref{subsec:CA Delivery}, among the group colorings shown in Table~\ref{tb:graph2}, the multicast encoder selects the one resulting in the lowest overall delivery rate, i.e., Colorings 1 or 2  depending on the value of $\delta$. Note that in Coloring 2, since both packets corresponding to the extracted vertices (colored with blue) are already cached at their corresponding receivers, they are not transmitted. Similarly, for Colorings 3 and 4, the locally available packets $W_{2,2}$ and $W_{1,2}$, respectively, are not included in the XORed codeword.}
		
		\item  For demand $\dbf_2$, the sender builds the augmented conflict graph as described in Sec \ref{subsec:CA Delivery}, and identifies two group colorings: $i)$ A coloring that 
		corresponds to transmitting two refinements with rates $\frac{1}{2}H(\Wsf_{1}|\Wsf_{2})$ and 
		$\frac{1}{2}H(\Wsf_{2}|\Wsf_{1})$, so that receiver 1 can recover packet $W_{1,2}$ from  $W_{2,2}$, and receiver 2 can recover $W_{1,1}$ from $W_{2,1}$, respectively. And, $ii)$ a coloring that, as in the correlation-unaware scheme, corresponds to transmitting the XOR of the requested packets missing from each receiver's cache. 
		The multicast encoder selects the group coloring resulting in the lowest overall delivery rate, given as
		\begin{align}
		\min\Big\{    \frac{1}{2}  H(\Wsf)  ,\, \frac{1}{2}H(\Wsf_1|\Wsf_2)+ \frac{1}{2}H(\Wsf_2|\Wsf_1)    \Big\}  
		= 
		\min\Big\{\frac{1}{2},\, \delta\Big\} H(W).  \notag
		\end{align}
	\end{itemize}
	Therefore, the  average load for capacity $M=H(W)$ is 
	\begin{align}
	\Rach(M)  &=   
	\min\Big\{\frac{1}{2},\, \delta\Big\} H(W).  \notag
	\end{align}	
	
	\item  $M = 2H(W)$:  The library is fully stored at both receivers resulting in zero average load.
\end{itemize}

\noindent With the proposed caching policy, the memory-rate pairs 
$$\Big(M,\,R\Big)\in\bigg\{\Big(0,\, (1+ \delta/2)H(W) \Big),\;\Big( H(W) ,\,  \min \{1/2 ,\, \delta \} H(W)   \Big),\Big(  2H(W), 0 \Big)\bigg\}$$
are achievable, and, as in \cite{maddah14fundamental}, through memory-sharing we achieve the lower convex envelope of these points. Hence, 
\begin{align} 
{ \Rach}(M) =            
\begin{cases}
(1+\frac{\delta}{2}) (H(W)-M) + \min\{\frac{1}{2}, {\delta}\}  M,                
& M \in  [0,\,H(W)] \\
\min\{\frac{1}{2},{\delta} \}   (2H(W)-M)  ,        
&     M \in  (H(W),\, 2H(W)]
\end{cases}\label{eq: rate 2 user 2file}
\end{align}

\begin{table}
	\centering
	\includegraphics[width= 0.7\linewidth]{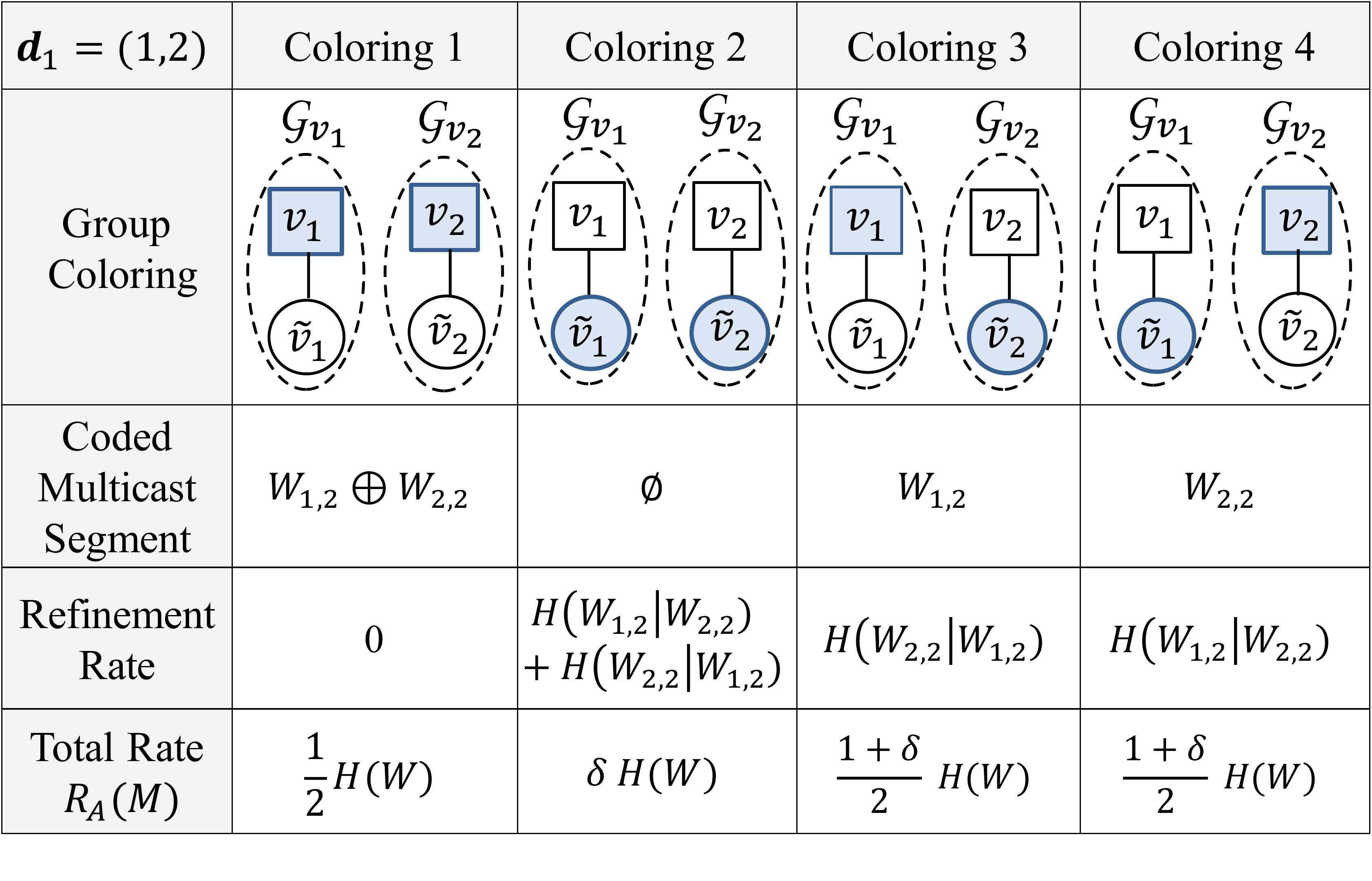}
	\captionsetup{width=.98\linewidth}
	\caption{Correlation-aware group coloring for the two-receiver two-file system with deterministic caching 
		when $M=H(W)$.}
	\label{tb:graph2}
\end{table}

The following theorem quantifies the rate gap of the proposed scheme from the optimal rate-memory trade-off, $R^{*}(M)$, as a function of the cache capacity M.
\begin{theorem}\label{thm:optimality of scheme} 
	For any cache capacity $M \in [0,H(W)]$, the proposed CA-CACM scheme achieves an average rate $\Rach(M)$, such that
	$$ \Rach(M) - R^*(M) \,\leq   \frac{1}{2} \min \Big\{H(\Wsf_1|\Wsf_2)  ,\, I(\Wsf_1;\Wsf_2) \Big\}      ,$$
	and for any  $M \in (H(W),\; 2H(W)]$, the scheme performs within half of the mutual information, i.e.,
	$$ \Rach(M) - R^*(M) \,\leq     \frac{1}{2} I(\Wsf_1;\Wsf_2)      .$$	
\end{theorem}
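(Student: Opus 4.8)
The plan is to compare the achievable rate $\Rach(M)$ derived in \eqref{eq: rate 2 user 2file} against a lower bound on $R^*(M)$, and control the difference over the two memory regimes separately. For the regime $M\in(H(W),2H(W)]$, note that $\Rach(M)=\min\{\tfrac12,\delta\}(2H(W)-M)$, and a trivial (cut-set) lower bound already gives $R^*(M)\geq 0$; but a sharper lower bound follows by observing that even with arbitrarily large caches, if the two receivers request distinct files $W_1^F,W_2^F$ the sender must at least enable both to be recovered, and the caches at the end of the caching phase carry at most $2MF$ bits about a source of total entropy $2H(W)F$. A converse of the Gray--Wyner / cut-set type then yields $R^*(M)\geq \tfrac{?}{?}(2H(W)-M)$ or an expression involving $I(W_1;W_2)$; substituting and simplifying, one checks that $\Rach(M)-R^*(M)\leq \tfrac12 I(W_1;W_2)$ holds uniformly on this interval, with the worst case at $M\to H(W)^+$.

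For the regime $M\in[0,H(W)]$, I would use the piecewise-linear structure: since both $\Rach(M)$ and (any convex lower bound on) $R^*(M)$ can be sandwiched between their values at the endpoints $M=0$ and $M=H(W)$ after taking lower convex envelopes, it suffices to bound the gap at these two corner points and invoke convexity/memory-sharing (exactly as in \cite{maddah14fundamental}). At $M=0$ we have $\Rach(0)=(1+\delta/2)H(W)$, and a converse — e.g.\ a cut-set bound isolating one receiver combined with the Gray--Wyner lower bound from \cite{hassanzadeh2017rate,ISITjournal} specialized to $M=0$ — gives $R^*(0)\geq H(W)+\tfrac12 H(W_1|W_2)$ or $R^*(0)\geq H(W)+\tfrac12 I(W_1;W_2)$ depending on which is tighter; either way the gap at $M=0$ is at most $\tfrac12\min\{H(W_1|W_2),I(W_1;W_2)\}$. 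At $M=H(W)$, $\Rach(H(W))=\min\{\tfrac12,\delta\}H(W)$, and one shows the matching converse $R^*(H(W))\geq \min\{\tfrac12,\delta\}H(W)-\tfrac12\min\{H(W_1|W_2),I(W_1;W_2)\}$, again via a two-receiver cut-set argument. Because the gap is a convex function dominated by its endpoint values, it stays below $\tfrac12\min\{H(W_1|W_2),I(W_1;W_2)\}$ throughout $[0,H(W)]$.

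Concretely, the key steps in order are: (i) recall the achievable expression \eqref{eq: rate 2 user 2file}; (ii) state and prove (or cite from \cite{hassanzadeh2017rate,ISITjournal}) the relevant information-theoretic converse bounds on $R^*(M)$ at $M=0$, $M=H(W)$, and on $(H(W),2H(W)]$, using cut-set and Gray--Wyner arguments together with $H(W_1|W_2)=H(W_2|W_1)=\delta H(W)$ and $I(W_1;W_2)=(1-\delta)H(W)$; (iii) evaluate $\Rach(M)-R^*(M)$ at the corner points and show each is at most $\tfrac12\min\{H(W_1|W_2),I(W_1;W_2)\}$; (iv) invoke convexity of the gap (both upper and lower bounds are piecewise linear / convex, and memory-sharing preserves the inequality) to extend the bound to all $M\in[0,H(W)]$; (v) handle the large-memory interval directly, where $R^*(M)$ is small and $\Rach(M)$ decreases linearly to $0$, so the gap is maximized near $M=H(W)$ and bounded by $\tfrac12 I(W_1;W_2)$.

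The main obstacle I anticipate is establishing a sufficiently tight converse at the intermediate point $M=H(W)$: the achievable rate there is $\min\{\tfrac12,\delta\}H(W)$, which is itself a minimum of two regimes (coded-XOR vs.\ correlation-refinement), so the converse must be strong enough in \emph{both} the small-$\delta$ and large-$\delta$ cases simultaneously. This likely requires combining a standard index-coding/cut-set lower bound (handling the $\delta\geq\tfrac12$ case, where the XOR transmission of rate $\tfrac12 H(W)$ is essentially optimal) with a correlation-sensitive bound akin to the Gray--Wyner converse of \cite{hassanzadeh2017rate} (handling $\delta<\tfrac12$, where the refinement rate $\delta H(W)$ dominates), and then verifying that the residual gap never exceeds $\tfrac12\min\{H(W_1|W_2),I(W_1;W_2)\}=\tfrac12\min\{\delta,1-\delta\}H(W)$ — a case analysis on whether $\delta\lessgtr\tfrac12$ that should close but requires care at the crossover $\delta=\tfrac12$.
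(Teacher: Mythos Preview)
Your overall strategy---compare $\Rach(M)$ from \eqref{eq: rate 2 user 2file} to a converse on $R^*(M)$---matches the paper, but the execution is more circuitous than necessary and contains one genuine weak spot.

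The paper's proof is a direct computation: it imports from \cite{ISITjournal} an explicit piecewise-linear lower bound
\[
R^*(M)\;\geq\;\begin{cases}(1+\tfrac{\delta}{2})H(W)-M, & M\in[0,H(W)),\\[1mm] \tfrac12\big((1+\delta)H(W)-M\big), & M\in[H(W),(1+\delta)H(W)),\\[1mm] 0, & M\in[(1+\delta)H(W),2H(W)],\end{cases}
\]
and then subtracts this from \eqref{eq: rate 2 user 2file} piece by piece. On $[0,H(W))$ the gap simplifies algebraically to $\tfrac12\min\{1-\delta,\delta\}\,M\leq \tfrac12\min\{1-\delta,\delta\}H(W)$, and on the two sub-intervals of $(H(W),2H(W)]$ a similar two-line calculation gives $\leq\tfrac{1-\delta}{2}H(W)$. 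No endpoint-plus-interpolation argument, no case analysis on $\delta\lessgtr\tfrac12$ beyond what the $\min$ already encodes, and the ``main obstacle'' you anticipate at $M=H(W)$ never materializes: a single converse formula from \cite{ISITjournal} handles both the coded-XOR and refinement regimes simultaneously.

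The weak spot in your plan is step (iv). You assert that ``the gap is a convex function dominated by its endpoint values,'' but $\Rach(M)-\text{(lower bound)}$ is a difference of two convex functions, which need not be convex. The argument happens to go through here only because both $\Rach$ and the cited lower bound are \emph{linear} (not merely convex) on $[0,H(W)]$, so the gap is linear and hence bounded by its endpoint values. If you keep your approach, you must argue linearity explicitly, not convexity. Also note that at $M=0$ the gap is in fact exactly zero (the scheme meets the converse there), so your endpoint check is easier than you suggest; the entire gap on $[0,H(W)]$ comes from the slope mismatch, which is exactly what the paper's direct subtraction exposes in one line.
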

\noindent\begin{proof}
	The proof is given in Appendix~\ref{app:optimality of scheme}, which follows from comparing $\Rach(M)$ given in eq.~\eqref{eq: rate 2 user 2file} with the lower bound given in \cite{ISITjournal}. 
\end{proof}

We remark that the proposed correlation-aware CACM scheme described above places content in the receiver caches while taking into account the correlation among the files. Suppose that for $M=H(W)$, the caches were filled as 
\begin{align}
&Z_{1} =  \{W_{1,1},\; W_{2,1}\},\;\; Z_{2} = \{ W_{1,2},\; W_{2,2}\}.  \label{cache2}
\end{align}
In a conventional correlation-unaware scheme as in \cite{maddah14fundamental,maddah14decentralized,ji2017order,ji14groupcast}, and \cite{ji15multiple}, the cache configurations in \eqref{cache1} and \eqref{cache2} yield the same average load of $\frac{1}{2}H(W)$. However, the proposed CA-CACM scheme achieves a lower rate when adopting cache configuration \eqref{cache1}, as it provides better references for compression during the delivery phase.

We numerically compare the performance of the proposed CA-CACM scheme with the lower bound given in \cite{ISITjournal}, and with the Gray-Wyner based CACM scheme proposed in \cite{hassanzadeh2017rate,ISITjournal}, which uses  Gray-Wyner source coding \cite{gray1974source} to jointly compress the files before the caching phase. Fig.~\ref{fig:two files} displays the rate-memory trade-off for a 2-DMS with $H(\Wsf) =1$ and $H(\Wsf_1|\Wsf_2)= H(\Wsf_2|\Wsf_1)=\delta=0.25$, which, as stated in Theorem~\ref{thm:optimality of scheme}, performs very close to the lower bound.

\begin{figure} [H]
	\centering
	\includegraphics[width=0.5\textwidth]{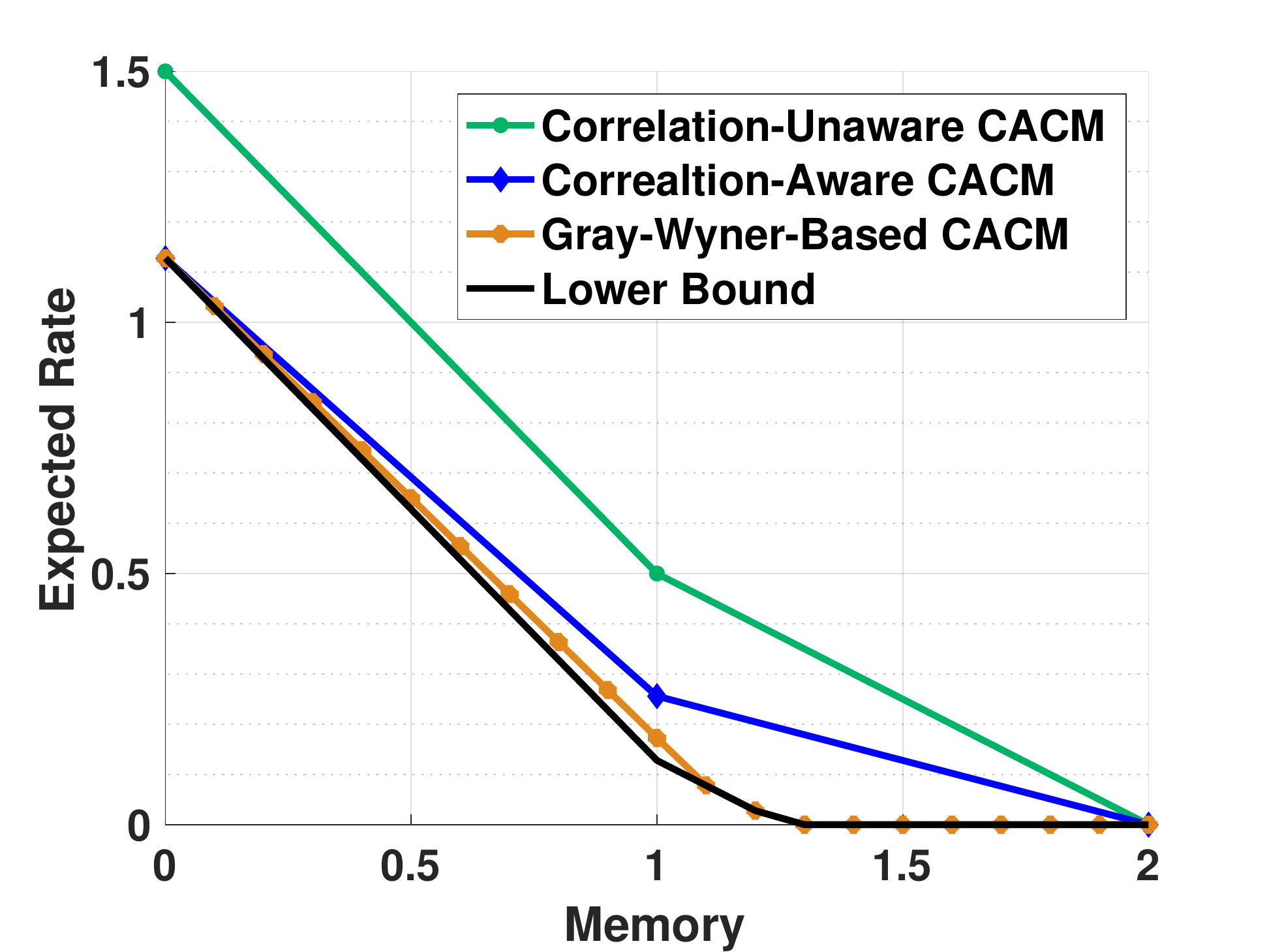}
	\captionsetup{width=.98\linewidth}
	\caption{Rate-memory trade-off for the two-receiver system for files generated according to a 2-DMS with $H(W) = 1$ and $\delta = 0.25$.}
	\label{fig:two files}
\end{figure}

\section{Numerical Results and Discussions}~\label{sec:Simulations}
We numerically compare the rates achieved by the proposed correlation-aware scheme,  CA-CACM, with
respect to the state-of-the-art CACM scheme proposed in \cite{ji2017order}, which does not consider and exploit content correlation. The scheme in \cite{ji2017order} is a  combination of random fractional caching and coded multicasting, and, as stated in Sec.~\ref{subsec:rate static}, it is equivalent to the proposed CA-CACM when $G_\delta=1$, and consequently, when the files are independent.  

\subsection{Static Library:}
We consider a broadcast caching network with $K=10$ receivers requesting files from a static library, as described in Sec.~\ref{sec:Problem Formulation}. Figs. \ref{fig:AllFigs}(a) and (b) display the rate-memory trade-off as the memory size varies for a library with symmetric statistics such that $H(W)=1$, and for $\delta =0.1$, each file (and consequently each of its packets) is $\delta$-correlated with $G_\delta$ other files. The figures plot the expected achievable rate, in average number of transmissions normalized to the file size, versus memory size (cache capacity) $M$, normalized to the file size for a library with $N=20 $ and $90$ files. As expected, the correlation-aware scheme outperforms the scheme that is oblivious to the file correlations.

\begin{figure*}
	\begin{subfigure}{0.5\linewidth}\centering
		\includegraphics[width=0.75\linewidth]{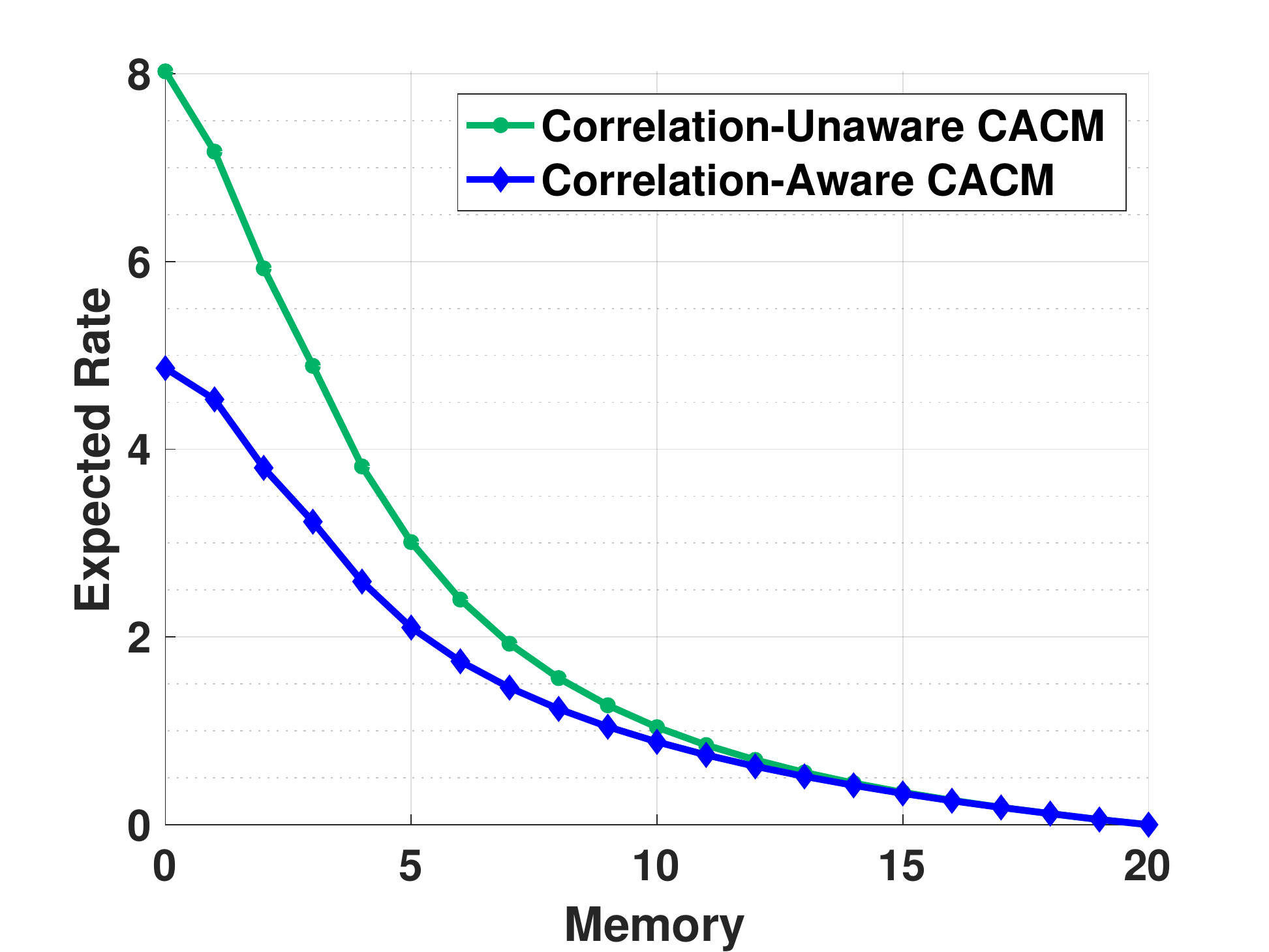}
		\subcaption{Static Setting: $K = 10$, $N = 20$, $\delta = 0.1$, and $G_\delta = 4$.}
	\end{subfigure}\hspace*{\fill}
	\begin{subfigure}{0.5\linewidth}\centering
		\includegraphics[width=0.75\linewidth]{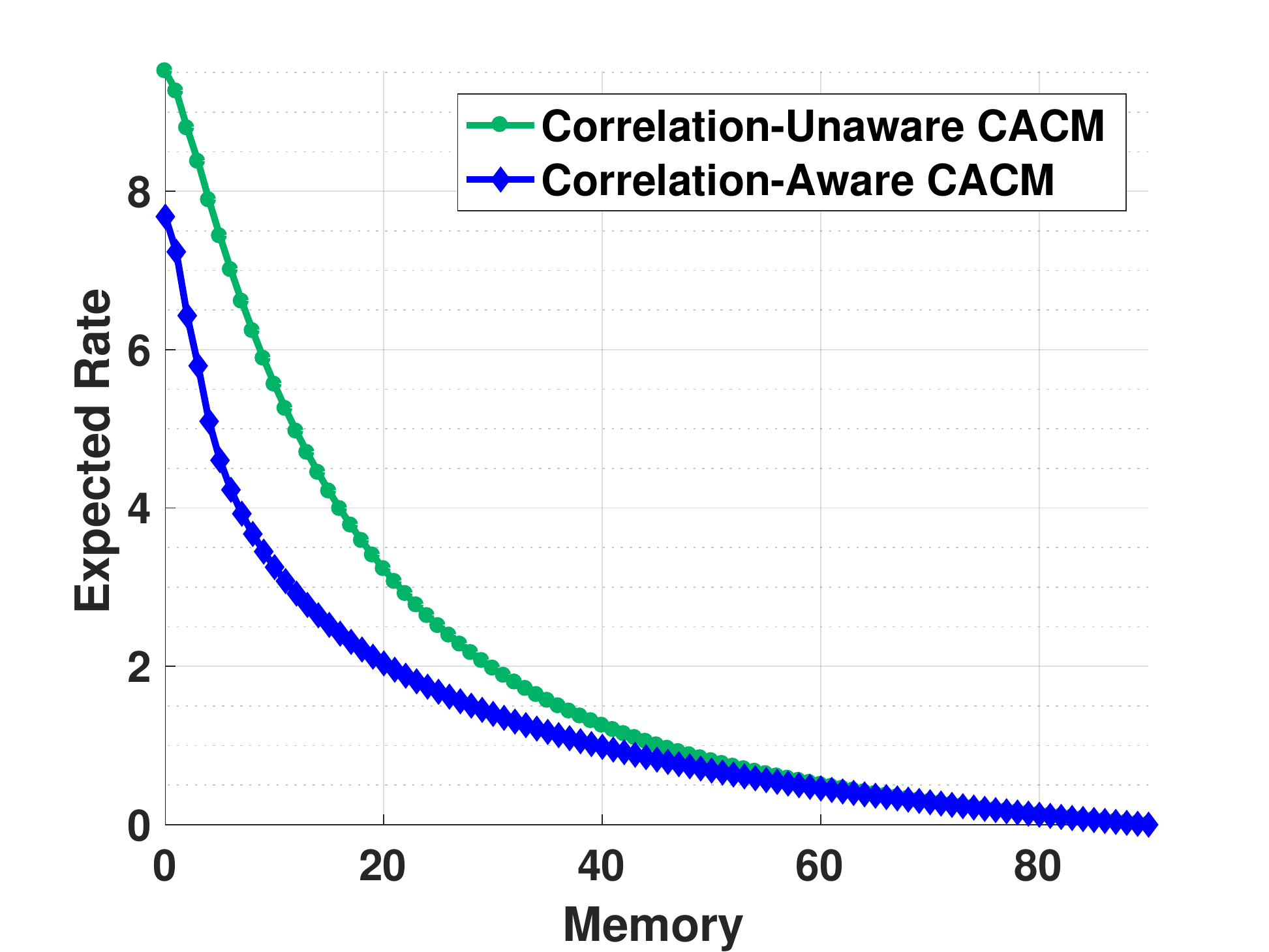}
		\subcaption{Static Setting: $K = 10$, $N=90$, $\delta = 0.1$, and $G_\delta = 6$.}
	\end{subfigure}
	\begin{subfigure}{0.5\linewidth}\centering
		\includegraphics[width=0.8\linewidth]{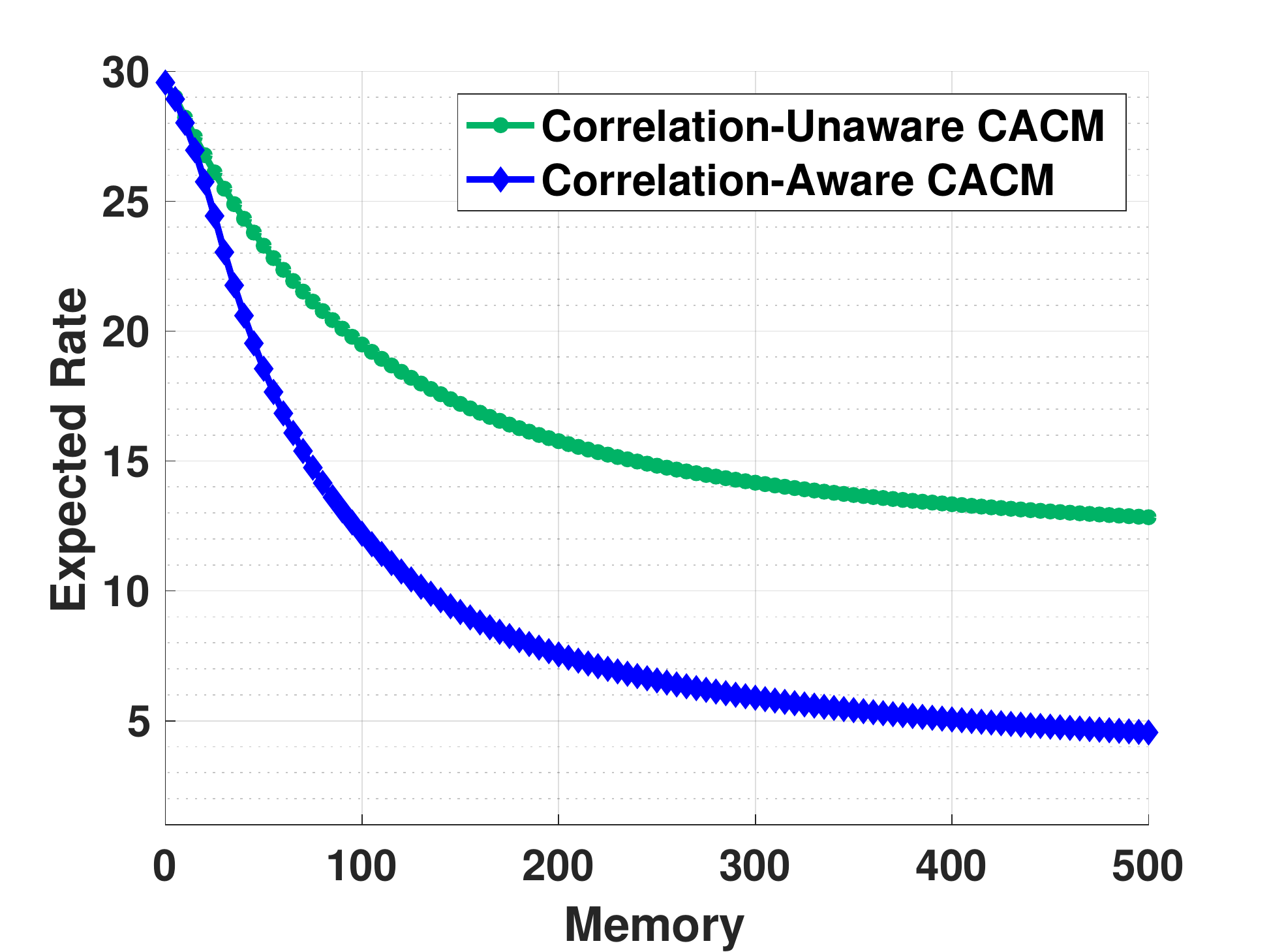}
		\subcaption{Dynamic Setting: $K = 30$, $N=1000$, $\delta = 0.3$, $\pi = 0.4$.}
	\end{subfigure}\hspace*{\fill}
	\begin{subfigure}{0.5\linewidth}\centering
		\includegraphics[width=0.8\linewidth]{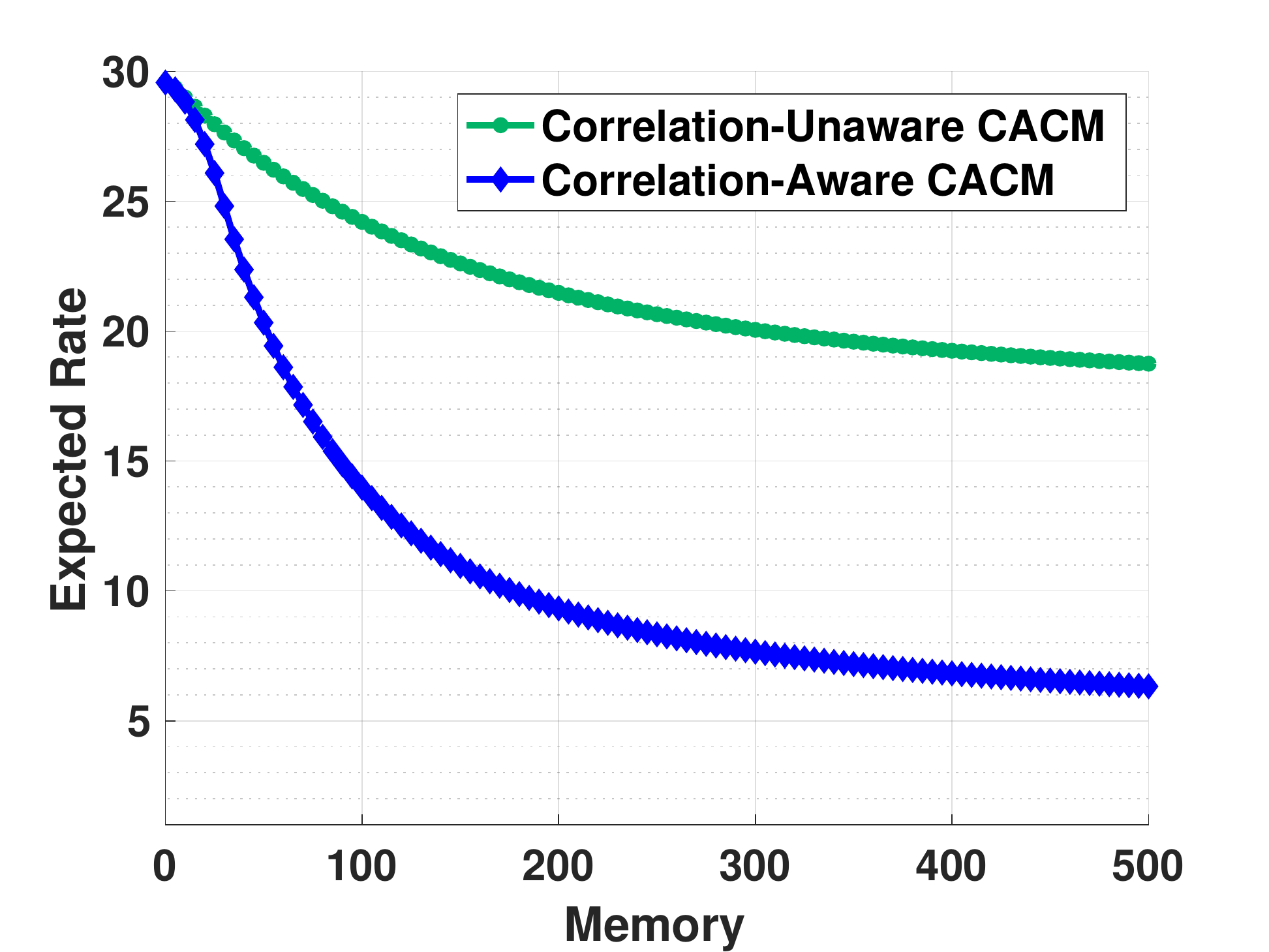}
		\subcaption{DynamicSetting: $K = 30$, $N=1000$, $\delta = 0.3$, $\pi = 0.6$.}
	\end{subfigure}
	\captionsetup{width=.98\linewidth}
	\caption{Rate-memory trade-off for a library with: (a), (b) static correlated content, and (c), (d) dynamic correlated content.}
	\label{fig:AllFigs}
\end{figure*}

We observe that the correlation-aware scheme is able to achieve rate reductions that go well beyond the state of the art correlation-unaware counterpart. The improvement in performance relies on the fact that, during the delivery phase, the sender compresses the set of requested files into a multicast codeword composed of content correlated to the requested files in addition to the requested files themselves, which results in increased coding opportunities. Specifically, the proposed  CA-CACM  achieves a $1.56\times$ reduction in the expected rate compared to the correlation-unaware scheme for a cache capacity equal to $10\%$ of the library size when $N=20$, and when $N = 90$ there is a $1.7\times$ reduction.

Next, we assume that the receivers are equipped with a cache capacity for storing a percentage of the library files, and compare the performance of CA-CACM with the correlation-unaware scheme. In Fig.~\ref{fig:RateUser} the schemes are compared for a library with $N=1000$ files, while the number of receivers varies from $K = 10$ to $50$ when each receiver's cache capacity is $10\%$ of the entire library size. We remark that the results displayed here correspond to a given system parameter $\delta$, and we do not optimize over the choice of $\delta$. 

\begin{figure}[H]
	\centering
	\includegraphics[width=0.5\linewidth]{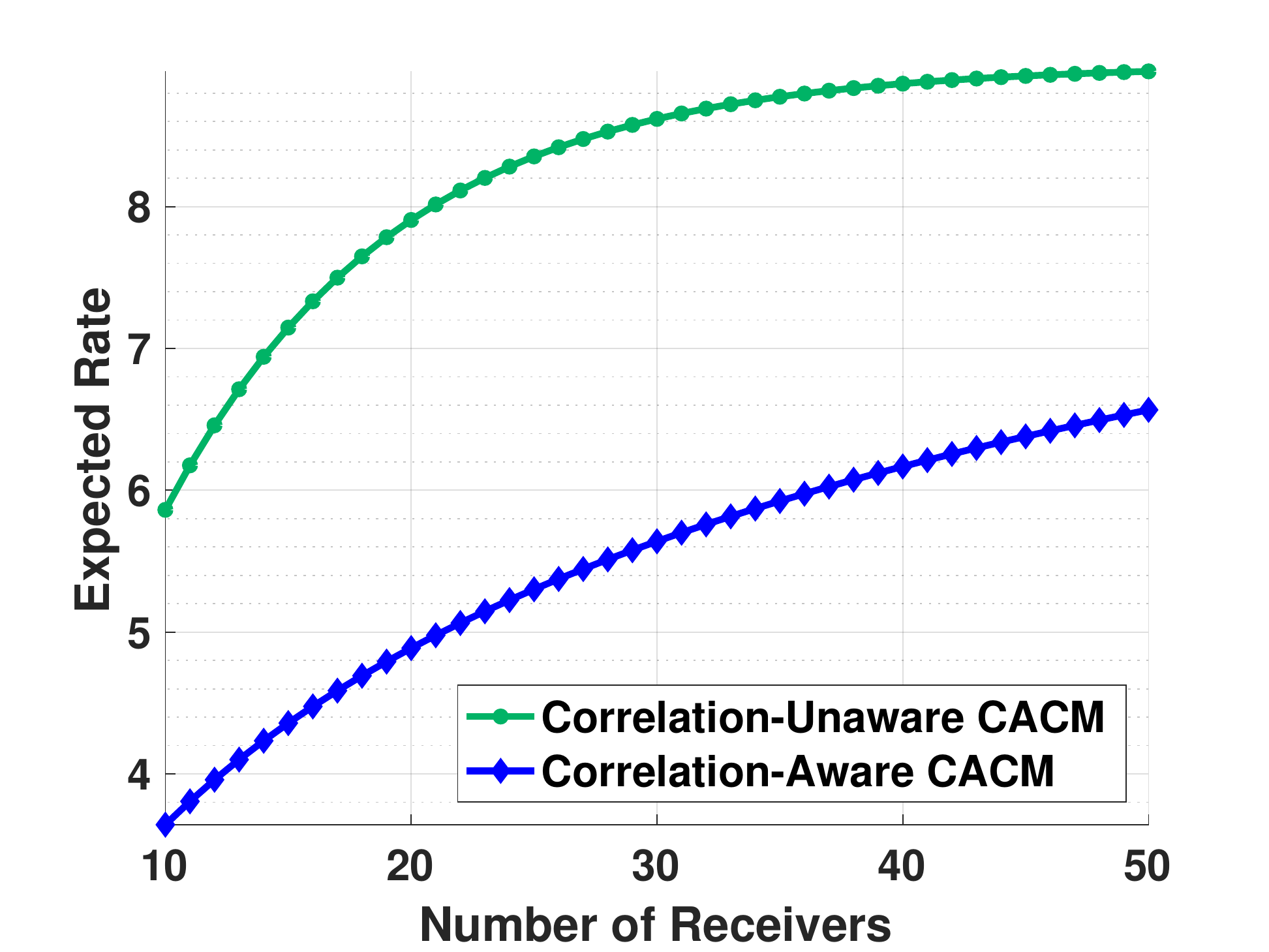}
	\captionsetup{width=.98\linewidth}
	\caption{Rate-memory trade-off in a network with $N = 1000$, $M$ $=0.1\,N$, $\delta = 0.1$, $G_\delta = 5$ for number of receivers from $10$ to $50$. }
	\label{fig:RateUser}
\end{figure}

\subsection{Dynamic Library:}
We consider a network with dynamically correlated content, as described in Sec.~\ref{sec:Problem Formulation}.  Figs.~\ref{fig:AllFigs}(c) and (d) illustrate the potential of the proposed CA-CACM scheme for exploiting the correlation among the different versions of the content. The plots are given for a system with $K=30$ receivers, $N=1000$ files, and a level of correlation of $\delta = 0.3$ among the different versions of a file.

From the plots it is observed that the CA-CACM scheme  is significantly superior in utilizing the cached content to alleviate the load on the shared link compared to the correlation-unaware scheme. When new versions of files become available at the sender with a probability of $\pi =0.4$, the CA-CACM scheme performs $2.8\times$ better compared to the correlation-unaware scheme for a cache size equal to half the library, and its performance gap becomes $3\times$ when updated versions become available with a larger probability of $\pi =0.6$. The CA-CACM scheme is steadily effective in exploring global caching gains,  across all cache capacities, among the files that have not been updated as well as the files for which a new version is available. However, since the new content has not been prefetched in the receiver caches, the correlation-unaware scheme falls short from constructing coded messages for delivering the requested content for which a new version is available, and can only use coding for the files that have not changed.

\section{Conclusion}~\label{sec:Conclusion}
In this paper, we have formulated the problem of efficient delivery of dynamic correlated sources over a broadcast caching network using information-theoretic tools. 
We have proposed a correlation-aware scheme in which receivers store content pieces based on their popularity as well as on their correlation with the rest of the file library in the caching phase, and receive compressed versions of the requested files according to the information distributed across the network and their joint statistics during the delivery phase. 
The proposed scheme is shown to significantly outperform state of the art approaches that treat library files as mutually independent content. 
In addition, when compared with recent solutions that employ joint file compression before the caching phase, our scheme, by using individually compressed files in the caching phase and providing on-demand compression during the delivery phase, can adapt to changes in the content library, yielding superior robustness to system dynamics, particularly relevant in next generation dynamic content services.

\begin{appendices}
	\section{Proof of Theorem \ref{thm:uniform rate}} \label{app:uniform rate}
	We upper bound the expected rate achieved by the proposed CACM scheme for a given $\delta$, as $F,B\rightarrow\infty$, where the group coloring of the augmented graph is done based on the Greedy Group Coloring (GGC) described in Sec.~\ref{subsec: algorithms}. Recall that the expected rate achieved by the scheme is the sum of the coded segment rate resulting from Algorithms \ref{algorithm: GClC 1}   and \ref{algorithm: GClC 2}, and the uncoded refinement segment rate that is required for lossless reconstruction of the demands. Once the total rate is quantified, minimizing over $\delta$ results in the right-hand side of \eqref{R_GGC}. This provides an upper bound on the rate achieved with the Group Coloring based CACM scheme proposed in Sec.~\ref{subsec:index code}, which has NP hard complexity, and in turn, upper bounds the optimal rate-memory function $R^{*}(M)$. We prove Theorem \ref{thm:uniform rate} following the same procedure as in \cite[Appendix A]{ji2017order}.
	\subsection{Rate Achieved by Algorithm GGC$_1$}
	\subsubsection{\bf Coded Segment Rate}\label{subsec:uni multicast}
	For a given packet-level cache configuration $\Cbf$ and demand realization $\Qbf$,
	the  rate achieved by GGC$_{1}$ is a function of the total  number of colors  assigned by Algorithm \ref{algorithm: GClC 1} to the augmented conflict graph $\mathcal H_{\Cbf,\Qbf}$, which we denote by $\Jc(\Cbf, \Qbf)$. By definition $\Jc(\Cbf, \Qbf)$ is  the total number of independent sets {\em selected} by the algorithm. Recall that an independent set is a set of vertices in a graph, no two of which are adjacent. 
	
	Let $\Kc_{\ell}\subseteq \Kc \equiv \{1,\dots,K\}$ denote a subset of $\ell$ receivers. By construction, Algorithm~\ref{algorithm: GClC 1}  associates to each subset $\Kc_{\ell}$ a total number of $\Jc_{\Cbf,\Qbf}( \Kc_{\ell})$ independent sets of size $\ell$, such that for each independent set $\Ic$
	\begin{itemize}
		\item[1)] $\forall v\in \mathcal I,\;   \{\mu(v), \eta(v)\}  \equiv \Kc_{\ell}$,
		\item[2)] $\forall v\in \mathcal I,\;  \nexists v'\in \Gc_{r(v)} $  belonging to an independent set of size larger than $\ell$.
	\end{itemize}
	We refer to $\Kc_{\ell}$ as the {\em receiver label} of independent set $\Ic$.  As a result, when considering all possible receiver labels  $\Kc_{\ell}$ of size $\ell\in \{1,\dots, K\}$, each group in the graph will only be assigned to independent sets of a given size (i.e., independent sets of different size are not assigned to the same group.). However, multiple  independent sets of the same size can be assigned to one group, in the event of which, we, uniformly at random, only select one of those sets. We remark that this random selection is done for analytical evaluation of the number of colors assigned in the augmented conflict graph, and is equivalent to line (3) in Algorithm ~\ref{algorithm: GClC 1}. 
	
	Consequently, in order to quantify $\Jc(\Cbf, \Qbf)$, we generate all possible receiver labels, and count the total number of independent sets associated to each receiver label. A necessary condition for an independent set to be associated to receiver label $\Kc_{\ell}$ is that, for any receiver $k \in \Kc_{\ell}$, there exist a group $\Gc_{v_r}$ with root node $v_r \in \Vc_r$, such that
	\begin{itemize}
		\item[1)] $\mu(v_r)= k$, i.e., receiver $k$ is requesting packet $\rho(v_r)$.
		\item[2)] There exists a node $v\in\Gc_{v_r}$, such that $\eta(v)=\Kc_{\ell}\setminus \{k\}$, i.e., $\rho( v)$ is cached by all receivers in $\Kc_{\ell}\setminus \{k\}$, and
		not by any other receiver.
		\item[3)] $\Kc_{\ell}$ is the largest receiver label across all $v\in\Gc_{v_r}$ that satisfies conditions 1 and 2.
	\end{itemize}
	
	Then, for a given $\Cbf$ and $\Qbf$, the number of independent sets {\em selected} by the algorithm becomes
	
	\begin{equation}
	\Jc(\Cbf,\Qbf) = \sum\limits_{\ell =1 }^{K} \sum\limits_{\Kc_{\ell} \subseteq \Kc }   \mathcal J_{\Cbf,\Qbf}(\Kc_{\ell}) 
	\end{equation}
	with
	\begin{equation}
	\mathcal J_{\Cbf,\Qbf}(\Kc_{\ell}) =  \max\limits_{k\in \Kc_{\ell}}  \; \sum\limits_{\substack{ v_r \in \Vc_r: \\ \rho(v_r) \ni \Qbf_k  }  }        \mathbbm{1}    \Big\{  \mathcal \Kc_{\ell}   \text{ is associated to }    \Gc_{v_r}    \Big\},   
	\end{equation}
	where, for brevity, by ``$\Kc_{\ell}$ is associated to $\Gc_{v_r} $'' we mean that $\Gc_{v_r} $ is assigned to an independent set  associated with receiver label  $\Kc_{\ell}$. We define the random variable,
	$$\Ysf(\Kc_{\ell}, \Gc_{v_r}) \triangleq \mathbbm{1}    \Big\{  \mathcal \Kc_{\ell}   \text{ is associated to }    \Gc_{v_r}    \Big\}.$$
	
	Enumerating over all receiver labels, the expected number of independent sets assigned by Algorithm~\ref{algorithm: GClC 1} is
	\begin{align}
	\EE_{\Cbf}\bigg[ \EE_{\Qbf}\Big[   \mathcal J(\Cbf,\Qbf )  |    \Cbf   \Big]  \bigg]   
	&=   \EE_{\Cbf}\Bigg[\EE_{\Qbf}\bigg[ \sum\limits_{\ell =1 }^{K} \sum\limits_{\Kc_{\ell} \subseteq \Kc }   \mathcal J_{\Cbf,\Qbf}(\Kc_{\ell})    \Big|   \Cbf \bigg]\Bigg]\notag\\
	& =    \EE_{\Cbf}\Bigg[  \EE_{\Qbf} \bigg[ \sum\limits_{\ell =1 }^{K} \sum\limits_{\Kc_{\ell} \subseteq \Kc }   \max\limits_{k\in \Kc_{\ell}}   \sum\limits_{\substack{ v_r \in \Vc_r: \\ \rho(v_r) \ni \Qbf_k  }  }       \hspace{-2mm}   \Ysf(\Kc_{\ell}, \Gc_{v_r})    \Big|  \Cbf \bigg]   \Bigg] \notag\\ 
	& =   \sum\limits_{\ell =1 }^{K} \sum\limits_{\Kc_{\ell} \subseteq \Kc }   \EE_{\Cbf}\Bigg[   \EE_{\Qbf} \bigg[   \max\limits_{k\in \Kc_{\ell}}    \sum\limits_{\substack{ v_r \in \Vc_r: \\ \rho(v_r) \ni \Qbf_k  }  } \hspace{-2mm}      \Ysf(\Kc_{\ell}, \Gc_{v_r})    \Big|   \Cbf \bigg]   \Bigg],    \label{eq: colors}
	\end{align}
	where $\Cbf$ is the random cache configuration resulting from the
	random caching scheme with uniform caching distribution, and $\Qbf$ is the packet-level demand realization resulting from the random i.i.d. requests with uniform demand distribution. 
	
	Based on Lemma~\ref{Lemma:Lemma 1} given in Appendix \ref{App:Lemma 1}, for a given $\Gc_{v_r}$ and $\Kc_{\ell}$, the random variable $\Ysf(\Kc_{\ell}, \Gc_{v_r}) $ follows a Bernoulli distribution with parameter
	\begin{align}
	\hspace{-2mm}\lambda(\ell,G_\delta) =    \sum\limits_{g =0 }^{G_\delta-1} \binom{G_\delta-1}{g}  \Big(1-\frac{M}{N} \Big)^{g} \Big(\frac{M}{N} \Big)^{ G_\delta-1 - g } \psi(\ell,g) ,  
	\end{align}
	where $G_\delta$ is the size of group $\Gc_{v_r}$ for any root node $v_r \in  \Vc_r$, and $\psi(\ell,g) $ is given in \eqref{eq:psi}. Similar to \cite[Appendix A]{ji2017order} it can be shown that for $B\rightarrow \infty$, we have
	\begin{align}
	\max\limits_{k\in \Kc_{\ell}}  \; \sum\limits_{\substack{ v_r \in \Vc_r: \\ \rho(v_r) \ni \Qbf_k  }  }   \; \frac{\Ysf(\Kc_{\ell}, \Gc_{v_r}) }{B(1-M/N)}\;  \stackrel{p}{\rightarrow}\; \lambda(\ell,G_\delta)   .
	\end{align}
	
	Hence, from \eqref{eq: colors}, as $F,B\rightarrow\infty$, the expected coded multicast rate is given by
	\begin{align}
	\frac{1}{B}\; \EE_{\Cbf}\bigg[ \EE_{\Qbf}\Big[   \mathcal J(\Cbf,\Qbf )   \;| \;   \Cbf   \Big]  \bigg] 
	=  \sum\limits_{\ell =1 }^{K} \binom{K}{\ell}   \Big(1-\frac{M}{N}\Big)  \;\lambda(\ell,G_\delta)     . 
	\end{align}

	\subsubsection{\bf Refinement Segment Rate}\label{subsec:uni unicast}
	Additional transmissions are required to enable lossless reconstruction of the requested packets for which a correlated packet was delivered rather than the packet itself. We quantify the number of such packets using an approach similar to the multicast rate.   From Lemma~\ref{Lemma:Lemma 1}, when considering receiver label $\Kc_\ell$, a refinement is required when a {\em virtual} node belongs to an independent set with receiver label $\Kc_{\ell}$, and it is selected for transmission rather than the root node. A necessary condition for one of the virtual nodes to be a candidate for transmission is that at least one of the corresponding packets not be cached at the receiver requesting the root node. Therefore, the range of $g$ in Lemma~\ref{Lemma:Lemma 1} becomes $\{1,\dots,G_\delta-1\}$. When $g\in \{1,\dots,G_\delta-1\}$ of the $\delta$-correlated packets are not cached at the receiver, a refinement is required only when the root node does not belong to an independent set with receiver label $\Kc_{\ell}$. As explained in Lemma~\ref{Lemma:Lemma 1}, this condition is equivalent to the root node belonging to an independent set with smaller size, while at least one of the remaining $g$ virtual nodes constructs $\Kc_{\ell}$, which occurs with probability
	\begin{align}
	\widehat P_{\ell} \; \Big( \sum\limits_{t=1}^{g} \binom{g}{t}     \; (P_{\ell})^{t} \; (\widehat P_{\ell})^{g-t}  \Big),
	\end{align}
	with $P_{\ell}$ and $\widehat P_{\ell}$ defined in \eqref{eq:Pell} and \eqref{eq:P hat App}.
	
	Recall that when an independent set of size $\ell$ is selected by the Algorithm, all the nodes in the independent set are XORed (coded) together and sent as one transmission over the shared link. By construction the selected independent set can be composed of virtual nodes rather than the root node, in the event of which additional uncoded transmissions are delivered for each virtual node that is in the independent set, each with rate $\delta$. A node belongs to an independent set with receiver label size $\ell$ with probability $P_{\ell}$, and it belongs to a set with smaller size with probability $\widehat P_{\ell}$. Therefore, the expected number of uncoded transmission in an independent set of size $\ell$ is
	\begin{align}
	\xi(\ell) = \sum\limits_{i=1}^{\ell}   \,i\, \binom{\ell}{i}  (\widehat P_{\ell})^i   (P_{\ell})^{\ell-i} . 
	\end{align}

	Overall, the refinement segment rate is upper bounded by
	\begin{align}
	\delta\,	\sum\limits_{\ell =1 }^{K} \binom{K}{\ell}  \Big(1-\frac{M}{N} \Big)    \;   \xi(\ell)  \; \Delta\lambda(\ell,G_\delta) ,  
	\end{align}
	with 
	\begin{align}	
	&\Delta\lambda(\ell,G_\delta)  = \sum\limits_{g =1   }^{G_\delta-1} \binom{G_\delta-1}{g}  \Big(1-\frac{M}{N} \Big)^{g} \Big(\frac{M}{N} \Big)^{ G_\delta-1 - g } \hspace{-1mm}\Delta \psi(\ell,g)  ,  \\
	&\Delta \psi(\ell,g) \triangleq \sum\limits_{t=1}^{ g}\,      \binom{g}{t}  \,   \alpha(\ell,t)     \; (P_{\ell})^{t} \,    (\widehat P_{\ell})^{g+1-t}   ,  
	\end{align}
	and $\alpha(\ell,t) $ defined as in \eqref{eq:lemma alpha}.

	\subsection{Rate Achieved by Algorithm GGC$_2$}\label{app: GGC2}
	Algorithm GGC$_2$ executes group coloring by randomly selecting a group in the augmented conflict graph, and  determining the best representative of the group, i.e., the node that results in the highest naive multicasting gain. In other words, it identifies the (root or virtual) node whose corresponding packet belongs to the largest number of groups in the graph, where, for ease of exposition, we say packet $\rho(v)$ belongs to group $\Gc_{v_r}$ if there exists a node $v'\in \Gc_{v_r}$ such that $\rho(v)=\rho(v')$. Once the representative is identified, the same color is assigned to all the groups to which the representative packet belongs, and the group representative is multicasted (by itself) over the shared link. This transmission is sent to deliver the requested packets corresponding to the root nodes of all colored groups, and additional refinements are delivered to ensure lossless recovery of the requested packets. The procedure of identifying a representative from a randomly selected group is repeated until a color has been assigned to all the groups in the augmented conflict graph. 
	
	We upper bound the achievable rate by the rate corresponding to cache capacity $M=0$. In this case, the augmented conflict graph consists of vertices representing all the packets of the requested files, and therefore, we can upper bound the rate on the file level rather than on the packet level. Based on Definition~\ref{def:delta-cor vec}, we define the {\em file-level} $\delta$-ensemble of $W_n^F$, $n\in\{1,\dots,N\}$,  as the set of files in the library that are $\delta$-correlated with it, i.e. the set 
	\begin{align}
	E_n\triangleq \Big\{ W_{n'}^F:   H(W_{n'},W_n)\leq(1+\delta)H(W), n'\in\{1,\dots,N\} \Big\}. 
	\end{align}
	Due to the symmetry across library files, the size of each file-level $\delta$-ensemble is equal to $G_\delta$. Based on GGC$_2$, if two requested files are $\delta$-correlated, i.e., each of them is in the file-level  $\delta$-ensemble of the other one, the sender selects one of the files. It mutlicasts the selected file and sends a refinement for the reconstruction of the one not selected. The normalized expected rate is upper bounded by 
	\begin{align}
	\Psi_2^S(K,N, \delta)  \leq 
	\EE\Big[ &\text{Number of selected distinct requested files} \Big]   \notag \\
	&\quad +
	\delta \; \EE\Big[ \text{Number of distinct requests not selected} \Big] .\label{eq:GCC2 1}
	\end{align}
	In order to upper bound the first term in \eqref{eq:GCC2 1}, we determine the number of distinct requested files, such that none  of them belongs to the file-level $\delta$-ensemble of another selected file. Consider file $W_n^F$ for which $|E_n|=G_\delta$, i.e., it is $\delta$-correlated with $G_\delta$ other files. File $W_n^F$ will be multicasted over the shared link if a file from its file-level $\delta$-ensemble is requested, and $W_n^F$ is selected as the representative. Since all the files are equivalent, each file is selected as the representative with probability $1/G_\delta$. The probability of requesting a file from $\delta$-ensemble $E_n$ is    
	\begin{align}
	\PP\Big(  \mathbbm 1&\Big\{ \text{A file from $E_n$ is requested}\Big\}\Big)  
	= 1- \Big(1-\frac{G_\delta}{N}\Big)^{K},  
	\end{align}
	and therefore, for the first term in \eqref{eq:GCC2 1} we have
	\begin{align}  
	\EE\Big[ \text{Number of selected distinct requested files} \Big]   
	&= \EE\bigg[ \sum\limits_{n = 1}^{N}   \, \mathbbm 1\Big\{ \text{File $W_n^F$ is selected as the representative}\Big\}\bigg] \notag \\
	&= \EE\bigg[ \sum\limits_{n = 1}^{N}   \, \frac{1}{G_\delta}\mathbbm 1\Big\{ \text{A file from $E_n$ is requested}\Big\}\bigg] \notag \\
	& = \sum\limits_{n = 1}^{N}  \frac{1}{G_\delta} \Big(1- \Big(1-\frac{G_\delta}{N}\Big)^{K}     \Big) \notag \\
	&=   \frac{N}{G_\delta} \Big(   1- \Big(1-\frac{G_\delta}{N}\Big)^{K}     \Big) .\label{eq:GGC2 2} 
	\end{align}
	For the second term in \eqref{eq:GCC2 1}, we have
	\begin{align}
	\EE\Big[ &\text{Number of distinct requests not selected} \Big] \notag\\
	&=\EE\Big[ \text{Number of distinct requests}\Big]- \EE\Big[ \text{Number of selected distinct requested files} \Big]\notag\\
	&\stackrel{(a)}{=} \EE\bigg[ \sum\limits_{n = 1}^{ N }   \, \mathbbm 1\Big\{ \text{File $W_n^F$ is requested}\Big\} \bigg]   
	-\frac{N}{G_\delta} \Big(   1- \Big(1-\frac{G_\delta}{N}\Big)^{K}     \Big)  \notag\\
	& = N\Big(   1- \Big(1-\frac{1}{N}\Big)^{K}     \Big) -    \frac{N}{G_\delta} \Big(   1- \Big(1-\frac{G_\delta}{N}\Big)^{K}     \Big)  ,\label{eq:GGC2 3} 
	\end{align}
	where (a) follows from \eqref{eq:GGC2 2}.  By combing \eqref{eq:GCC2 1}-\eqref{eq:GGC2 3}, the rate is upper bounded as
	\begin{align}
	\Psi_2^S(K,N,\delta) \leq&
	(1-\delta)      \frac{N}{G_\delta}  \Big(   1- \Big(1-\frac{G_\delta}{N}\Big)^{K}     \Big)    \;+\;  \delta\; N\Big(   1- \Big(1-\frac{1}{N}\Big)^{K}     \Big)
	\end{align}

	\section{Lemma 1}\label{App:Lemma 1}
	\begin{lemma}\label{Lemma:Lemma 1}
		Under the setting of Theorem \ref{thm:uniform rate}, for any group $\Gc_{ v_r}$ and receiver label $\Kc_{\ell}$, the random variable $\Ysf(\Kc_\ell,\Gc_{v_r}) =  \mathbbm{1}    \Big\{  \mathcal \Kc_{\ell}   \text{ is associated to }    \Gc_{v_r}    \Big\}   $ follows a Bernoulli distribution with parameter
		\begin{align}
		&	\lambda(\ell,G_\delta) =  \sum\limits_{g =0 }^{G_\delta-1} \binom{G_\delta-1}{g}  \Big(1-\frac{M}{N} \Big)^{g} \Big(\frac{M}{N} \Big)^{ G_\delta-1 - g } \psi(\ell,g)  ,   \\
		&	 \psi(\ell,g)  \triangleq \sum_{t=1}^{g+1}\; \binom{g+1}{t}     \alpha(\ell, t)   \; (P_{\ell})^t   \;  (\widehat P_{\ell} )^{g+1-t}  ,  \label{eq:psi} \\
		&\alpha(\ell,t)   \triangleq  \hspace{-3mm}\sum\limits_{d=1}^{\min\{t,\,x(\ell)\}}
		\hspace{-1mm}  \frac{1}{d}  \binom{x(\ell)-1}{d-1} \bigg( \sum\limits_{  t_1+  \dots+ t_d = t  }    \frac{ t! }{t_1! \,\dots\, t_d!  }  \bigg) ,  \hspace{-2mm}   \label{eq:lemma alpha} 
		\end{align}
		
	\end{lemma}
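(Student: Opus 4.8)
The plan is to compute $\PP\big(\Ysf(\Kc_\ell,\Gc_{v_r})=1\big)$ directly from the random fractional cache placement, after translating the three ``association'' conditions stated in the proof of Theorem~\ref{thm:uniform rate} into a single combinatorial event. Fix $\Gc_{v_r}$, whose root packet $\rho(v_r)$ is requested by $k:=\mu(v_r)$, and fix $\Kc_\ell\ni k$ with $|\Kc_\ell|=\ell$; set $T:=\Kc_\ell\setminus\{k\}$, so $|T|=\ell-1$. The root packet is requested but not cached by $k$, so $k\notin\eta(v_r)$; and any virtual node of $\Gc_{v_r}$ whose packet is cached by $k$ has $k\in\eta(\cdot)$, hence satisfies condition~2 for no label and is excluded from the maximization in condition~3. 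Thus only the \emph{relevant} nodes matter: the root, together with the virtual nodes whose packet is \emph{not} cached by $k$. For each relevant node $v$ the candidate label is $\{k\}\cup\eta(v)$ with $\eta(v)\subseteq\Kc\setminus\{k\}$, and $\Gc_{v_r}$ is associated to $\Kc_\ell$ exactly when $\eta(v)=T$ for some relevant $v$, no relevant node has $|\eta(v)|>\ell-1$, and the uniform tie-break among the \emph{distinct} size-$(\ell-1)$ cache sets realized by the relevant nodes selects $T$.

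Next I would use that, in the $B\to\infty$ regime, the cache indicators of the $G_\delta$ packets of $\Gc_{v_r}$ at the $K$ receivers are mutually independent $\mathrm{Bernoulli}(M/N)$ variables --- mutual independence because, by Remark~\ref{remark1}, these $G_\delta$ packets share one packet index but belong to $G_\delta$ distinct files. First condition on $g$, the number of the $G_\delta-1$ virtual packets not cached by $k$: this is $\mathrm{Binomial}(G_\delta-1,1-M/N)$, giving the outer sum and binomial weight of $\lambda(\ell,G_\delta)$. Given $g$, there are $g+1$ relevant nodes with i.i.d.\ cache sets $S_1,\dots,S_{g+1}$, each obtained by including every receiver $\neq k$ independently with probability $M/N$; in particular $\PP(S_j=T)=P_\ell$, $\PP(|S_j|<\ell-1)=\widehat P_\ell$ (recall $\widehat P_\ell=\sum_{i=1}^{\ell-1}\binom{K-1}{i-1}P_i$), and $\PP(|S_j|=\ell-1)=\binom{K-1}{\ell-1}P_\ell=x(\ell)P_\ell$.

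To obtain $\psi(\ell,g)$, condition further on $t$, the number of relevant nodes with $|S_j|=\ell-1$: the remaining $g+1-t$ must all have $|S_j|<\ell-1$ (no node may exceed $\ell-1$), contributing $\binom{g+1}{t}(\widehat P_\ell)^{g+1-t}$ and forcing $t\ge1$; the $t$ ``large'' nodes each pick one of the $x(\ell)$ size-$(\ell-1)$ subsets uniformly and independently. Conditioned on this, ``$T$ is realized and the tie-break picks $T$'' has probability $1/x(\ell)$ by exchangeability of the $x(\ell)$ subsets (the events ``$T'$ is finally selected'' are disjoint over $T'$ and have total probability $1$ on the conditioning event). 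Hence the contribution of the $t$ large nodes is $(x(\ell)P_\ell)^t\cdot\frac{1}{x(\ell)}=x(\ell)^{t-1}(P_\ell)^t$, so $\psi(\ell,g)=\sum_{t=1}^{g+1}\binom{g+1}{t}x(\ell)^{t-1}(P_\ell)^t(\widehat P_\ell)^{g+1-t}$. Matching $x(\ell)^{t-1}$ with $\alpha(\ell,t)$ is the last step: expanding over $d$, the number of \emph{distinct} size-$(\ell-1)$ subsets the $t$ large nodes realize, there are $\binom{x(\ell)-1}{d-1}$ choices for the $d-1$ subsets besides $T$, $\sum_{t_1+\dots+t_d=t}\frac{t!}{t_1!\cdots t_d!}$ (with each $t_i\ge1$) surjective assignments of the $t$ nodes onto these $d$ subsets, and the $1/d$ tie-break factor; summing over $d$ reproduces $\alpha(\ell,t)$ as written, and one checks $\alpha(\ell,t)=x(\ell)^{t-1}$ (either by the same exchangeability argument, or directly from the surjection identity $\sum_{d}\frac1d\binom{x-1}{d-1}\,d!\,S(t,d)=x^{t-1}$, $S(t,d)$ the Stirling number of the second kind). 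Combining the three conditioning layers yields the claimed Bernoulli parameter.

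The main obstacle I anticipate is the first step: correctly identifying the set of relevant nodes and recognizing that the tie-break in condition~3 ranges over the \emph{distinct realized} size-$(\ell-1)$ cache sets (not over individual nodes, nor over all $x(\ell)$ possible sets) --- everything after that is routine binomial/multinomial bookkeeping. A secondary, milder point is justifying the $B\to\infty$ independence of the cache indicators and that the symmetric-library hypothesis makes every group have exactly $G_\delta$ nodes, so that the $\mathrm{Binomial}(G_\delta-1,1-M/N)$ split over $g$ is exact in the limit.
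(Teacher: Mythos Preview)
Your proposal is correct and follows essentially the same decomposition as the paper's proof: condition on the number $g$ of virtual packets not cached at $\mu(v_r)$, then on the number $t$ of the $g+1$ candidate nodes whose cache set has size exactly $\ell-1$, with the remaining $g+1-t$ forced to have smaller cache sets. The paper then expands further over $d$, the number of distinct size-$\ell$ labels realized by those $t$ nodes, and over the specific label assignment, arriving at the expression $\alpha(\ell,t)$ as written; you instead short-circuit this last layer with the exchangeability observation that, conditioned on $t\ge 1$ nodes landing uniformly and independently on the $x(\ell)$ size-$(\ell-1)$ subsets, each particular $T$ is the one finally selected with probability $1/x(\ell)$, giving the closed form $\alpha(\ell,t)=x(\ell)^{t-1}$ directly. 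Your verification that the paper's $\alpha(\ell,t)$ indeed collapses to $x(\ell)^{t-1}$ via the surjection/Stirling identity is correct (using $\sum_d S(t,d)\,x^{\underline d}=x^t$ and $x^{\underline d}=x\,(x-1)^{\underline{d-1}}$). One small remark: the independence of the $G_\delta$ cache indicators you invoke does not actually require $B\to\infty$, since by Remark~\ref{remark1} the packets in a group come from $G_\delta$ \emph{distinct} files and the random fractional placement is done independently per file and per receiver; the $B\to\infty$ limit is needed elsewhere in the proof of Theorem~\ref{thm:uniform rate} for concentration, not for Lemma~\ref{Lemma:Lemma 1} itself.
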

	
	\begin{proof}	
		Recall that by {\em $\mathcal \Kc_{\ell}$ is associated to $\Gc_{v_r}$} we mean that group $\Gc_{v_r} $ is assigned to an independent set  associated with receiver label  $\Kc_{\ell}$.
		Without loss of generality, we illustrate how $\Gc_{v_r} $ is assigned to an independent set with receiver label $\mathcal \Kc_{\ell}$, by considering $\Gc_{v_r} = \{v_1,\dots,v_{G_\delta-1}\}\cup\{v_r\}$. Based on Condition 2 in Appendix~\ref{app:uniform rate}, a necessary condition for node $v\in\Gc_{v_r}$ to be part of the selected independent set associated with $\mathcal \Kc_{\ell}$ is that packet $\rho(v)$ not be cached at receiver $\mu(v)$, i.e., $\mu(v) \notin \eta(v)$. Then, $\rho(v)$ is a candidate for transmission for satisfying receiver $\mu(v_r)$'s request. Note that by definition root node $v_r$ is not cached at the receiver requesting it, and therefore, it is a candidate for transmission. Let us assume that only $g\in\{0,\dots, G_\delta-1\}$ nodes among the $G_\delta-1$ virtual nodes are a candidate for being transmitted, for example nodes $\Vc_g=\{v_1,\dots,v_g\}$. This means that the packets corresponding to those $g$ nodes are not cached at receiver $\mu(v_r)$, while the remaining $G_\delta-1-g$ packets are cached at $\mu(v_r)$. This event occurs with probability 
		$$
		\sum_{\Vc_g  \subseteq \Gc_{v_r}  \setminus\{v_r\}} \Big(1-\frac{M}{N}\Big)^g    \Big(\frac{M}{N}\Big)^{G_\delta-1-g}.$$
		
		Group $\Gc_{v_r} $ is assigned to an independent set associated with  $\Kc_{\ell}$, if  among the $g+1$ candidate nodes in $\{v_r,\Vc_g\}$, there are $t\in\{1,\dots,g+1\}$ nodes, say $\Vc_{g,t} = \{v_1,\dots,v_t\}$, such that:
		\begin{itemize}
			\item Any $v \in \Vc_{g,t}$ is part of an independent set of size $\ell$, one of which has receiver label $\Kc_{\ell}$,
			\item And, the remaining nodes in the set $\{v_r,\Vc_g\}$  belong to independent sets of size smaller than $\ell$.
		\end{itemize}
		We denote this event by $\Ec\Big(\Vc_g,\Vc_{g,t},\Kc_{\ell}\Big)$. Then,
		\begin{align}
		\lambda(\ell,G_\delta) & \,=\, \mathbb P  \bigg(     \mathbbm{1}    \Big\{  \mathcal \Kc_{\ell}   \text{ is associated to }    \Gc_{v_r}    \Big\}     = 1\bigg)   \notag\\
		=&  \sum_{g=0}^{G_\delta-1} \; \sum_{\Vc_g  \subseteq \Gc_{v_r}  \setminus\{v_r\}} \Big(1-\frac{M}{N}\Big)^g    \Big(\frac{M}{N}\Big)^{G_\delta-1-g}    \;\sum_{t=1}^{g+1}\; \sum_{\Vc_{g,t}  \subseteq \{v_r,\Vc_{g}\} }  \PP\bigg( \Ec\Big(\Vc_g,\Vc_{g,t},\Kc_{\ell}\Big)  \bigg) . \hspace{-8mm} \label{eq:lemma lambda} 
		\end{align}
		
		Among the $t$ nodes in $\Vc_{g,t} $, if there are $d$ distinct labels of size $\ell$, one of which is receiver label $\Kc_{\ell}$, only one is selected and associated to $\Gc_{v_r}$. The event that $\Kc_{\ell}$ is selected, denoted  by $\Ec( \Kc_{\ell} \text{ is selected} )$, occurs with probability $\frac{1}{d}$, where $d=\{1,\dots, \min\{t,x(\ell)  \}  \}$, and $x(\ell) = \binom{K-1}{\ell-1}$ denotes the total number of receiver labels of size ${\ell}$ containing receiver $\mu(v_r)$. In order to compute  $ \PP\bigg( \Ec\Big(\Vc_g,\Vc_{g,t},\Kc_{\ell}\Big)  \bigg) $, we need to compute the probability of the event that each $v_i\in \Vc_{g,t} $  belongs to a specific independent set with label $\Kc_{\ell}^{(i)}$, where the set of receiver labels $\{\Kc_{\ell}^{(1)},\dots,\Kc_{\ell}^{(t)}\}$ contains $d$ distinct labels one of which is $\Kc_{\ell}$. We denote this event by $\Ec\Big(\Vc_{g,t}, \{\Kc_{\ell}^{(1)},\dots,\Kc_{\ell}^{(t)}\}, d \Big)$, which occurs with probability
		\begin{align}
		\PP\bigg(&\Ec\Big(\Vc_{g,t}, \{\Kc_{\ell}^{(1)},\dots,\Kc_{\ell}^{(t)}\}, d \Big) \bigg) =	 (  P_{\ell} )^t ,
		\end{align}
		with 
		\begin{align}
		P_{\ell} \triangleq 	  \Big(1-\frac{M}{N}\Big)^{(K-\ell) }\Big(\frac{M}{N}\Big)^{(\ell-1) } .\label{eq:Pell}
		\end{align}
		Let $\mathcal S_{\ell,t,d}$ denote the set of all possible sets $\{\Kc_{\ell}^{(1)},\dots,\Kc_{\ell}^{(t)}\}$ containing only $d$ distinct labels. By summing over $\mathcal S_{\ell,t,d}$
		\resizebox{\linewidth}{!}{
			\begin{minipage}{\linewidth}
				\begin{align} 
				&\sum\limits_{\{\Kc_{\ell}^{(1)},\dots,\Kc_{\ell}^{(t)}\}\in \mathcal S_{\ell,t,d} } 	\hspace{-2mm}
				\PP\bigg( \Ec\Big(\Vc_{g,t}, \{\Kc_{\ell}^{(1)},\dots,\Kc_{\ell}^{(t)}\}, d \Big) \bigg) 
				=
				\Big|\mathcal S_{\ell,t,d}\Big|  (P_{\ell})^t , \notag
				\end{align}
			\end{minipage}
		}
		where
		\begin{align}
		\Big|\mathcal S_{\ell,t,d}\Big|  =   \bigg( \sum_{  t_1+ \dots+ t_d = t  }    \frac{ t! }{t_1!\, t_2!\,\dots\, t _d!  }  \bigg) \binom{x(\ell)-1}{d-1} . \label{eq:Sell}
		\end{align}
		By definition,  probability $ \PP\bigg( \Ec\Big(\Vc_g,\Vc_{g,t},\Kc_{\ell}\Big)  \bigg) $ is given by
		\begin{align}
		\PP&\bigg( \Ec\Big(\Vc_g,\Vc_{g,t},\Kc_{\ell}\Big)  \bigg)  =\notag\\
		&\sum\limits_{d=1}^{\min\{t,x(\ell)\}}  \sum\limits_{\{\Kc_{\ell}^{(1)},\dots,\Kc_{\ell}^{(t)}\}\in \mathcal S_{\ell,t,d} }   \PP\bigg(\Ec\Big(\Vc_{g,t}, \{\Kc_{\ell}^{(1)},\dots,\Kc_{\ell}^{(t)}\}, d \Big)\wedge \Ec\Big( \Kc_{\ell} \text{ is selected} \Big) \bigg) 
		\Gamma\Big(  \{v_r,\Vc_g\}\setminus \Vc_{g,t}, \ell\Big), \label{eq: blah}
		\end{align}
		where $\Gamma \Big( \{v_r,\Vc_g\} \setminus \Vc_{g,t}, \ell\Big)$ is the probability that the nodes in the set $\{v_r,\Vc_g\}\setminus \Vc_{g,t}$  belong to independent sets of size smaller than $\ell$, which is given by
		\begin{align}
		\Gamma\Big( \{v_r,\Vc_g\} \setminus \Vc_{g,t}, \ell\Big)=   ( \widehat P_{\ell})^{g+1-t} , 
		\end{align}
		with 
		\begin{align}
		\widehat P_{\ell} \triangleq \sum_{i=1}^{\ell-1} \binom{K-1}{i-1}   P_{i} , \,\ell =2,\dots,K,\quad \widehat P_{\ell}   = 0\label{eq:P hat App}
		\end{align}
		From \eqref{eq: blah}, 
		\begin{align}
		\PP\bigg( \Ec\Big(\Vc_g,\Vc_{g,t},\Kc_{\ell}\Big)  \bigg)  = 
		\sum\limits_{d=1}^{\min\{t,x(\ell)\}}  \Big|\mathcal S_{\ell,t,d}\Big|  
		\frac{1}{d}\; (P_{\ell})^t   (\widehat P_{\ell})^{g+1-t} . \label{eq: last1}
		\end{align}		
		Lemma \ref{Lemma:Lemma 1} follows by replacing \eqref{eq: last1} in \eqref{eq:lemma lambda}.
	\end{proof}

	\section{Proof of Theorem~\ref{thm:uniform rate dynamic}}\label{App:uniform rate dynamic}
	As mentioned in Sec.~\ref{sec:Problem Formulation}, during the caching phase, the receivers fill their caches from the original library, and during the delivery phase, the demands are delivered from an updated library. In the updated library, a new version of each file becomes available with probability $\pi$. We denote the random number of updated files available during the delivery phase by $\widetilde N$. Then, a random number of the receivers, denoted by $\widetilde K$, request a file that has been updated, and the remaining $K-\widetilde K$ receivers request a files for which a new version is not available. Since the original library is composed of independent files, if file $n\in\{1,\dots,N\}$ has not been updated, then the $\delta$-ensemble of its packets are of size one, i.e., $\Omega_{U_{n,b}}=\{W_{n,b}\}$, and $G_\delta=1$. The $\delta$-ensemble of the packets of a file that has been updated contains two packets, $\Omega_{U_{n,b}}=\{U_{n,b},W_{n,b} \}$, and therefore,  $G_\delta=2$.   
	
	\subsection{Coded Segment Rate Achieved with GGC$_1$:}
	We upper bound the rate achieved with Algorithm GGC$_1$, as in Appendix \ref{app:uniform rate}, by upper bounding  the rate of the coded multicast  segment with \eqref{eq: colors}. Then, for a given receiver label of size $\ell$, for example label  $\Kc_\ell$, and a given root node $v_r$ in the augmented graph:
	\begin{itemize}
		\item If packet $\rho(v_r)$ corresponds to a file that has not been updated, based on Lemma~\ref{Lemma:Lemma 1} given in Appendix \ref{App:Lemma 1}, the random variable $\Ysf({\mathcal K}_{\ell}, \Gc_{v_r}) $ follows a Bernoulli distribution with parameter $\lambda(\ell,1)=P_\ell$, with $P_\ell$ given in \eqref{eq:Pell}. Similar to \cite[Appendix A]{ji2017order} it can be shown that for $B\rightarrow\infty$
		\begin{align}
		\sum\limits_{\substack{ v_r \in \Vc_r: \\ \rho(v_r)\ni \Qbf_k }  }    \frac{ \Ysf(\Kc_{\ell}, \Gc_{v_r})}{ B(1-M/N)} \;\;\stackrel{p}{\rightarrow}\;\; P_\ell   \label{eq:not updated} 
		\end{align}
		
		\item If packet $\rho(v_r)$ corresponds to a file that has been updated, then, none of its packets have been stored in the receiver caches. Therefore, $v_r$ can only belong to an independent set of size $\ell=1$, i.e., the packet can only be sent by itself. Proceeding as in Lemma~\ref{Lemma:Lemma 1}, for $\Gc_{v_r}=\{v_r,\tilde v\}$ we have the following cases.
		\begin{itemize}
			\item[$\circ$] {\bf Case 1}: With probability $\frac{M}{N}$ packet $\rho(\tilde v)$ is cached at $\mu(v_r)$. Then $g=0$, $\Vc_g=\emptyset$, and the only candidate for transmission is node $=v_r$, which is part of an independent set of size $\ell$ with probability $\widetilde P_\ell \triangleq   {\mathbbm 1}\{\ell=1\}$. Given that $\rho(\tilde v)$ is already cached at $\mu(v_r)$, as mentioned in Sec.~\ref{subsec:index code},  packet $\rho(\tilde v)$ is not included in the multicast codeword. Therefore, this case does not contribute to the coded segment of the multicast codeword. 
			
			\item[$\circ$] {\bf Case 2}: With probability $\Big(1-\frac{M}{N}\Big)$ packet $\rho(\tilde v)$ is not cached at receiver $\mu(v_r)$, then, $g=1$, $\Vc_g =\{\tilde v\}$, and both vertices $\{v_r,\tilde v\}$ are a candidate for transmission. Consequently, group $\Gc_{v_r}$ is assigned to $\Kc_\ell$ if $t$ of the candidates, for example the set $\Vc_{g,t} $, are such that all $t$ nodes are part of an independent set of size $\ell$, one of which is $\Kc_\ell$, and the other $g-t$ nodes belong to independent sets of size smaller than $\ell$. This event is denoted by $\Ec\Big(\Vc_g,\Vc_{g,t},\Kc_{\ell}\Big)$. Therefore, 
			\begin{itemize}
				\item[-] When  $t=1$ and $\Vc_{g,t} = \{v_r\}$: Since with probability one, root node $v_r$ belongs to an independent set with size $\ell=1$, and $\tilde v$ does not belong to an independent set with smaller size, i.e., $\widehat P_\ell=0$ when $\ell=1$ (as defined in \eqref{eq:P hat App}), then $\PP\bigg(\Ec\Big(\Vc_g,\Vc_{g,t},\Kc_{\ell}\Big)\bigg) = 0$.
				
				\item[-] When  $t=1$ and $\Vc_{g,t} = \{\tilde v\}$: For any $\ell >1$, since $v_r$ always belongs to an independent set with smaller size, then $\PP\bigg( \Ec\Big(\Vc_g,\Vc_{g,t},\Kc_{\ell}\Big)  \bigg)=P_\ell \; {\mathbbm 1}\{\ell>1\} $.
				
				\item[-] When  $t=2$ and $\Vc_{g,t}  = \{v_r,\tilde v\}$: the two nodes belong to independent sets with equal size only when $\ell=1$, and therefore,  $\PP\bigg( \Ec\Big(\Vc_g,\Vc_{g,t},\Kc_{\ell}\Big)  \bigg)= P_\ell \; {\mathbbm 1}\{\ell=1\} $. 
			\end{itemize}
			\vspace{5mm}
			Combining all these events, and from \eqref{eq:lemma lambda}, group $\Gc_{v_r}$ is assigned to an independent set associated with $\Kc_\ell$ with probability
			\begin{align}
			\mathbb P  \bigg(  &   \mathbbm{1}    \Big\{  \mathcal \Kc_{\ell}   \text{ is associated to }    \Gc_{v_r}    \Big\}     = 1\bigg)  \notag\\
			&\qquad  = \sum_{g=0}^{1} \; \sum_{\Vc_g  \subseteq \{\tilde v\}} \Big(1-\frac{M}{N}\Big)^g    \Big(\frac{M}{N}\Big)^{1-g}   \;\;\sum_{t=1}^{g+1}\; \sum_{\Vc_{g,t}  \subseteq \{v_r,\tilde v\} }  \PP\bigg( \Ec\Big(\Vc_g,\Vc_{g,t},\Kc_{\ell}\Big)  \bigg) \notag \\
			& \qquad = \Big(\frac{M}{N} \Big) \times {0} + \Big(1-\frac{M}{N}\Big) \Big(P_\ell \; {\mathbbm 1}\{\ell>1\} +  P_\ell \;{\mathbbm 1}\{\ell=1\}\Big)  \notag\\
			&\qquad =\Big(1-\frac{M}{N}\Big) P_\ell   .  \label{eq: updated}
			\end{align}
			There are $B$ root nodes in the augmented conflict graph corresponding to each of the requested files that have not been updated. Similar to \cite[Appendix A]{ji2017order} it can be shown that for $B\rightarrow\infty$
			\begin{align}
			\sum\limits_{\substack{ v_r \in \Vc_r: \\ \rho(v_r)\ni \Qbf_k }  }    \frac{ \Ysf(\Kc_{\ell}, \Gc_{v_r})}{ B} \;\;{ \stackrel{p}{\rightarrow }} \;\; 
			\Big(1-\frac{M}{N}\Big) P_\ell\label{eq:updated}
			\end{align}
		\end{itemize}
		
		By comparing \eqref{eq:not updated} and \eqref{eq:updated} it can be seen that
		\begin{align}
		\max\limits_{k\in \Kc_{\ell}}    \sum\limits_{\substack{ v_r \in \Vc_r: \\ \rho(v_r) \ni \Qbf_k  }  }       
		\frac{\Ysf(\Kc_{\ell}, \Gc_{v_r})}{B}   \;\;{ \stackrel{p}{\rightarrow }} \;\; \Big(1-\frac{M}{N}\Big) P_\ell
		\end{align}
		Then from \eqref{eq: colors}, it follows that
		\begin{align}
		\frac{1}{B} \EE_{\Cbf}\bigg[ \EE_{\Qbf}\Big[   \mathcal J(\Cbf,\Qbf )  |    \Cbf   \Big]  \bigg]  & =   
		\frac{1}{B} \sum\limits_{\ell =1 }^{K} \sum\limits_{\Kc_{\ell} \subseteq \Kc }   \EE_{\Cbf}\Bigg[   \EE_{\Qbf} \bigg[   \max\limits_{k\in \Kc_{\ell}}    \sum\limits_{\substack{ v_r \in \Vc_r: \\ \rho(v_r) \ni \Qbf_k  }  } \hspace{-2mm}      \Ysf(\Kc_{\ell}, \Gc_{v_r})    \Big|   \Cbf \bigg]   \Bigg],   \notag\\
		& =  \sum\limits_{\ell =1 }^{K} \binom{K}{\ell} \Big(1-\frac{M}{N}\Big) P_\ell . \notag 
		\end{align}

	\end{itemize}
	
	\subsection{Refinement Segment Rate Achieved with GGC$_1$:}
	We upper bound the refinement rate needed for lossless reconstruction, by the number of distinct files requested by the $\widetilde K$ receivers from the  $\widetilde N$ updated files. Therefore, 
	\begin{align}
	\EE \bigg[\EE_{\dbf}\Big[ &\text{Number of distinct requests from the updated library}  \Big]  \bigg] \notag\\
	&  \qquad\qquad = \EE \bigg[ \sum\limits_{n = 1}^{ \widetilde N}  \EE_{\dbf}\Big[ \mathbbm 1\Big\{ \text{Updated file $\File_n^F$ is requested}\Big\}\Big]  \bigg]  \notag \\
	&  \qquad\qquad =  \EE \Big[\widetilde N\Big(   1- \Big(1-\frac{1}{\widetilde N }\Big)^{\widetilde  K}     \Big)\Big]  \notag\\
	& \qquad\qquad  \stackrel{(a)}{\leq}\EE \bigg[ \EE \Big[   \widetilde N\Big(   1- \Big(1-\frac{1}{\widetilde N }\Big)^{\widetilde  K}     \Big) \Big|\widetilde N\Big] \bigg]  \notag\\
	& \qquad\qquad \stackrel{(b)}{\leq} \EE \Big[     \widetilde N\Big(   1- \Big(1-\frac{1}{\widetilde N }\Big)^{   K_\pi}     \Big)   \Big]  \notag\\
	& \qquad\qquad \stackrel{(c)}{\leq}    N_{\pi}\Big(   1- \Big(1-\frac{1}{N_{\pi}}\Big)^{   K_\pi}     \Big),   \notag\\
	\end{align}
	where (a) follows by applying the law of total expectation, and (b) and (c) follow by applying Jensen's inequality, and since $\EE[\widetilde K]= \pi K\triangleq K_\pi$ and $\EE[\widetilde N]= \pi N\triangleq N_\pi$.
	
	\vspace{2mm}
	The effective group coloring done by Algorithm GGC$_1$ is described as follows. Even though the updated version of a file is not locally available at any of the receivers, since it is correlated with the original version that was prefetched, if an updated file is requested, the sender is able to explore coding opportunities by: $i)$ using the cached packets to deliver the original version of the file, and $ii)$ then sending refinements that enable the receiver to recover the most recent version of the requested file. On the other hand, a strategy that is agnostic to the correlation among the different versions of a file would deliver the requested content through uncoded transmissions.

	\subsection{Rate Achieved with GGC$_2$}
	As in Appendix~\ref{app:uniform rate}-B, we upper bound the rate achieved with Algorithm GGC$_2$ by computing the number of distinct requested files, such that none of the files belong to the file-level $\delta$-ensemble of another file. Given that the sender always delivers the demand from the updated library, then, only $\File_{n}^F$ may be requested from the file-level $\delta$-ensemble $E_n=\{\File_n^F, W_n^F \}$. Therefore, the multicast codeword is composed of a concatenation of {\em all} the  requested files. As in \eqref{eq:GCC2 1}, an upper bound on the rate is given as
	\begin{align}
	\EE\Big[ \text{Number of selected distinct requested files} \Big]  +&
	\delta\; \EE\Big[ \text{Number of distinct requests not selected} \Big]\notag \\
	&\leq N\Big(   1- \Big(1-\frac{1}{N}\Big)^{K}     \Big).
	\end{align}

	\section{Proof of Theorem \ref{thm:optimality of scheme}}\label{app:optimality of scheme}
	A lower bound on the optimal average rate-memory function for a library with heterogeneous sources is given in \cite{ISITjournal}. For the setting considered in Sec.~\ref{Sec: 2user2file}, $H(W_1)=H(W_2)=H(W)$,   $H(W_1,W_2)=(1+\delta)H(W)$, and $I(W_1;W_2)=(1-\delta)H(W)$. Therefore, $R^*(M)$ is lower bounded as
	\begin{itemize}
		\item[-] $M \in  \Big[0,  \,  H(W) \Big),$ 
		$$ R^*(M)\geq   \frac{1}{2}  \Big(H(W_1,W_2)+ H(W)\Big) - M = (1+\frac{\delta}{2})H(W)  - M  ,$$
		\item[-] $M \in \Big[  H(W), \, H(W_1,W_2)\Big),$ 
		$$R^*(M)\geq     \frac{1}{2}\Big( H(W_1,W_2)- M\Big)=\frac{1}{2}\Big((1+\delta)H(W)  - M\Big) .\qquad\qquad$$
		\item[-]	$M \in \Big[   H(W_1,W_2),H(W_1)+H(W_2)\Big],$ 
		$$R^*(M)=0 .\qquad\qquad$$
	\end{itemize}
	By comparing the achievable rate $\Rach(M)$, given in \eqref{eq: rate 2 user 2file}, with the lower bound we have the following.
	\begin{itemize}
		\item When $M \in  \Big[0,  \,  H(W)\Big)$,
		\begin{align} 
		\Rach(M)  - R^*(M) &\leq  \Big(1+\frac{\delta}{2}\Big) \Big(H(W)-M\Big) + \min\Big\{\frac{1}{2}, {\delta}\Big\}  M   -  \Big((1+\frac{\delta}{2})H(W)  - M  \Big)  \notag\\
		& = \frac{1}{2} \min\Big\{ {1-\delta} , {\delta}  \Big\}M   \notag\\
		&  \stackrel{(a)}{\leq} \frac{1}{2} \min\Big\{ {1-\delta} , {\delta} \Big\}  H(W)  \notag\\
		&=  \frac{1}{2}\min\Big\{  I(W_1;W_2)  , H(W_1|W_2) \Big\}  ,\notag 
		\end{align}
		where in (a) the gap is upper bounded using $M< H(W)$.
		
		\item When $M \in  \Big[ H(W), H(W_1,W_2)\Big)$,
		\begin{align} 
		\Rach(M)  - R^*(M) &\leq   \min\Big\{\frac{1}{2},{\delta} \Big\}   \Big(2H(W)-M\Big)     -  \frac{1}{2}\Big( (1+\delta)H(W)-M\Big) \notag\\
		& =  \min\Big\{\frac{1-\delta}{2},\frac{3\delta-1}{2}  \Big\} H(W)+ \min\Big\{0, \frac{1-\delta}{2}  M  \Big\} \notag\\ 
		& = \min\Big\{\frac{1-\delta}{2},\frac{3\delta-1}{2}  \Big\}  H(W)\notag\\
		& \leq  \frac{1-\delta}{2 }  H(W) = \frac{1}{2} I(W_1;W_2).\notag
		\end{align}
		
		\item When $M \in  \Big[ H(W_1,W_2) ,2H(W)\Big]$, 
		\begin{align} 
		\Rach(M)  - R^*(M) &\leq  \min\Big\{\frac{1}{2},{\delta} \Big\}   \Big(2H(W)-M\Big)       \notag\\
		& \stackrel{(b)}{\leq}  \min\Big\{\frac{1}{2}, {\delta}  \Big\}(1-\delta)   H(W) \notag\\
		& \leq  \frac{1-\delta}{2 } H(W)= \frac{1}{2} I(W_1;W_2),\notag
		\end{align}
		where in (b) the gap is upper bounded using $M\geq H(W_1,W_2) = (1+\delta)H(W)$.
	\end{itemize}

\end{appendices}

\bibliographystyle{IEEEtran}
\bibliography{References}

\end{document}